\setlist[itemize]{parsep=1pt, topsep=1pt}
\newcommand\FF{\mathbb{F}}
\newcommand\ZZ{\mathbb{Z}}
\newcommand\calA{\mathcal{A}}
\newcommand\calB{\mathcal{B}}
\newcommand\calC{\mathcal{C}}
\newcommand\calD{\mathcal{D}}
\newcommand\calF{\mathcal{F}}
\newcommand\calG{\mathcal{G}}
\newcommand\calH{\mathcal{H}}
\newcommand\calI{\mathcal{I}}
\newcommand\calJ{\mathcal{J}}
\newcommand\calL{\mathcal{L}}
\newcommand\calM{\mathcal{M}}
\newcommand\calP{\mathcal{P}}
\newcommand\calS{\mathcal{S}}
\newcommand\calV{\mathcal{V}}
\newcommand\bfa{{\bm{a}}}
\newcommand\bfc{{\bm{c}}}
\newcommand\bfe{{\bm{e}}}
\newcommand\bfg{{\bm{g}}}
\newcommand\bfh{{\bm{h}}}
\newcommand\bfm{{\bm{m}}}
\newcommand\bfu{{\bm{u}}}
\newcommand\bfy{{\bm{y}}}
\newcommand\bfA{{\bm{A}}}
\newcommand\bfB{{\bm{B}}}
\newcommand\bfG{{\bm{G}}}
\newcommand\bfI{{\bm{I}}}
\newcommand\bfP{{\bm{P}}}
\def\sq{\mathbin{\scalerel*{\strut\rule{4ex}{4ex}}{\cdot}}}
\newcommand\gpub{\bm{G}_{\text{pub}}}
\newcommand\cpub{\mathcal{C}_{\text{pub}}}
\newcommand\hpub{\bm{H}_{\text{pub}}}
\newcommand\hsec{\bm{H}_{\text{sec}}}
\newcommand\cpubp{{\cpub^{\perp}}}
\DeclareMathOperator{\rk}{rk}
\renewcommand\epsilon{\varepsilon}
\title{An analysis of Coggia-Couvreur Attack on Loidreau’s
Rank-metric public-key encryption scheme in the general case}
\author{Pierre Loidreau\inst{1} \and  Ba-Duc Pham\inst{2}}
\institute{Univ Rennes, DGA MI, CNRS, IRMAR - UMR 6625, F-35000 Rennes, France \\\email{pierre.loidreau@univ-rennes1.fr } \and Univ Rennes, IRMAR - UMR 6625, F-35000 Rennes, France \\\email{ ba-duc.pham@univ-rennes1.fr}
}
\begin{document}
\maketitle \thispagestyle{plain} \pagestyle{plain}

\begin{abstract}
In this paper we show that in the case where the public-key can be distinguished from a random code in Loidreau’s encryption scheme, then Coggia-Couvreur attack can be extended to recover an equivalent secret key. This attack can be conducted in polynomial-time if the masking vector space has dimension $3$, thus recovering the results of Ghatak. 
\end{abstract}

\keywords{Rank metric codes, Gabidulin codes, code based cryptography, cryptanalysis}

\section*{Introduction}
Since the use of $\FF_{q^m}$-linear rank metric permits to design a short public key encryption scheme, one of the directions of code based cryptography consists in instantiating McEliece encryption scheme \cite{Eliece78} with codes in rank metric, \cite{gpt91, gmrz13}.

Because of the structure of Gabidulin codes, any cryptosystem instantiated with codes containing Gabidulin codes not sufficiently scrambled was attacked \cite{over08}. In 2017, Loidreau proposed a scheme based on  Gabidulin codes masked with a small dimensional vector space \cite{Loi17}. If the dimension of the vector space is too small, then there exists a very simple polynomial-time distinguishing algorithm. 

The question was to know if distinguishing is enough to break. Coggia and Couvreur \cite{CC19} showed that in the case where the dimension of the masking space is $2$, a decryption procedure can be recovered in polynomial-time.  More recently, Ghatak  \cite{gha20} claimed that this approach could be extended to a masking space of dimension $3$. 

In this work we show that this can be extended to any dimension. The attack is not necessary polynomial, but we include the previous results. Moreover we are able to prove rigorously under some assumptions the efficiency of the attack. 

\subsection*{Contribution}

\begin{enumerate}
    \item Completing the key-recovery attack for any $\lambda$
    \item In \cite{gha20}, the author uses one reduced polynomial to determine $\gamma = \{\gamma_1, \gamma_2\}$ where $\gamma$ specifies the secret subspace but their assumption is not clear in practice when we implement in MAGMA. In this paper, the use of system of polynomial equation gives a proof for the equivalent between the set of roots and the orbit of one root under the action of \textbf{PGL}$(3,\FF_q)$. It completes the polynomial time key recovery attack.

\end{enumerate}

\subsubsection*{Organization of the article} The first section outlines Loidreau’s scheme. In the beginning of the next section, we
formulate the distinguisher for any dimension of the secret subspace. Afterwards, we describe the attack for any $\lambda$ in 4 steps. In the end of this section, we analyse the complexity of the attack in case $\lambda=3$. We conclude with a discussion on the results and future works.

\section{The encryption scheme}
\label{Sec:Scheme}	

\subsection{Generalities}

Let $\bfG$ a random generator matrix of a Gabidulin code $\calG_k(\bfg)$. Fix an integer $\lambda \leq m$ and an $\FF_q$-vector subspace $\calV$ of $\FF_{q^m}$ of dimension $\lambda$. Let $\bfP \in \bm{GL}(n,\FF_{q^m})$ whose entries are all in $\calV$. Then, let 
\[
    \gpub = \bfG\bfP^{-1}
\]

\begin{itemize}
    \item KeyGen: Public key ($\gpub,t$) where $t = \lfloor \frac{n-k}{2\lambda}\rfloor$
    
     \hspace{1.5 cm} Secret key ($\bfg,\bfP$)
     \item Encryption: Given a plaintext $\bfm \in \FF_{q^m}^k$, choose $e \in \FF_{q^m}^n$ of rank weight $t$. The ciphertext is:
     
     \[
         \bfc = \bfm\gpub+ \bfe
     \]
     
     \item Decryption:
     	\begin{itemize}
     	\item Compute $\bfc\bfP = \bfm\bfG + \bfe\bfP$.
     
        \item  Decode in $\calG_k(\bfg)$ and $\rk(\bfe\bfP) \leq t\lambda \leq \frac{n-k}{2}$
\end{itemize}     	 
     	
\end{itemize}

Let us denote by $\cpub$ the code generated by  $\gpub$ and  by $\cpub^\perp$, the dual code. Let $\hpub$ be a generator matrix of $\cpub^\perp$.  It is immediate that 
 \[
        \hpub = \hsec\bfP^T
    \]
    where $\hsec$ is a parity-check matrix of $\calG_k(\bfg)$.

\subsection{Goal of a reconstructing attack and solution set}
\label{Sec:Solutions}

Our main goal is to design a reconstructing attack from the knowledge of $\cpub^\perp$
and under some particular sets of parameters. 

W.l.o.g, one can suppose that $1\in \calV$. Suppose that $\calV = \left\langle 1,\beta_1,\dots,\beta_{\lambda-1}\right\rangle_{\FF_q}$ for some $\{\beta_i\}_{i=1}^{\lambda-1} \in \FF_{q^m} \backslash \FF_q$. Therefore, $\bfP^T$ can be decomposed into

\[
    \bfP^T = \bfP_0 + \sum\limits_{i=1}^{\lambda-1}\beta_i \bfP_i
\]
where $\bfP_i$ are $n \times n$ matrices with entries in $\FF_q$ not necessarily invertible.

Let $\calC_{\text{sec}}^\perp$ the dual code of $\calG_k(\bfg)$. Thus, $\calC_{\text{sec}}^\perp = \calG_{n-k}(\bfa)$ for some $\bfa \in \FF_{q^m}^n$ with $\rk(\bfa) = n$. We define
\[
    \bfh_0 = \bfa\bfP_0, \bfh_1 = \bfa\bfP_1, \dots, \bfh_{\lambda-1} = \bfa\bfP_{\lambda-1}
\]

\begin{lemma}
    The code $\cpub^\perp$ is spanned by $\bfh_0^{[i]} + \sum\limits_{j=1}^{\lambda-1}\beta_j\bfh_j^{[i]}$ for $i = 0,\dots,n-k-1$
\end{lemma}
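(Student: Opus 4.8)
The plan is to start from the factorization $\hpub = \hsec \bfP^T$ established just above the statement, and to choose the parity-check matrix $\hsec$ of the Gabidulin code $\calG_k(\bfg)$ in a canonical Moore form. Since $\calC_{\text{sec}}^\perp = \calG_{n-k}(\bfa)$ for a full-rank vector $\bfa$, a natural generator matrix of $\cpubp$ before masking is the Moore matrix whose rows are $\bfa, \bfa^{[1]}, \dots, \bfa^{[n-k-1]}$, where $x^{[i]} = x^{q^i}$ denotes the $i$-th Frobenius power (extended componentwise to vectors). Multiplying this on the right by $\bfP^T$ shows that $\cpubp$ is spanned by the vectors $\bfa^{[i]} \bfP^T$ for $i = 0, \dots, n-k-1$.

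The key step is then to expand $\bfa^{[i]} \bfP^T$ using the decomposition $\bfP^T = \bfP_0 + \sum_{j=1}^{\lambda-1} \beta_j \bfP_j$ with the $\bfP_j$ having entries in $\FF_q$. The point is that the Frobenius $x \mapsto x^{[i]}$ is $\FF_q$-linear and fixes $\FF_q$ pointwise, so for a matrix $\bfP_j$ over $\FF_q$ we have $(\bfa \bfP_j)^{[i]} = \bfa^{[i]} \bfP_j$; in other words the $\FF_q$-linear map induced by $\bfP_j$ commutes with the $i$-th power map. Applying this to each summand gives
\[
    \bfa^{[i]} \bfP^T = \bfa^{[i]} \bfP_0 + \sum_{j=1}^{\lambda-1} \beta_j^{[i]}\,\bigl(\text{?}\bigr),
\]
and here one must be slightly careful: $\beta_j \bfP_j$ raised to the $[i]$-th power componentwise gives $\beta_j^{[i]} (\bfa^{[i]} \bfP_j)$, not $\beta_j (\bfa^{[i]} \bfP_j)$. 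So the literal Frobenius power of the row $\bfa \bfP^T$ is $\bfh_0^{[i]} + \sum_j \beta_j^{[i]} \bfh_j^{[i]}$, which matches the claimed spanning set only if $\beta_j \in \FF_q$ (contradicting the hypothesis) — unless the intended reading is that $\cpubp$ is the code generated by $\hsec \bfP^T$ where $\hsec$ itself is the Moore matrix of $\bfa$, so that the $i$-th row of $\hpub$ is exactly $\bfa^{[i]}\bfP^T = \bfh_0^{[i]} + \sum_j \beta_j^{[i]} \bfh_j^{[i]}$. I would resolve this by using the convention that the spanning vectors are indexed so the $i$-th generator is $\bfh_0^{[i]} + \sum_{j=1}^{\lambda-1} \beta_j^{[i]} \bfh_j^{[i]}$ and noting that, with the $\bfh_j$ as defined via $\bfh_j = \bfa \bfP_j$, the $[i]$ exponent on $\bfh_j$ already carries along the $[i]$ exponent that should appear on $\beta_j$; i.e. one writes $\bfh_j^{[i]} = (\bfa\bfP_j)^{[i]} = \bfa^{[i]}\bfP_j$ and then the row is $\bfa^{[i]}(\bfP_0 + \sum_j \beta_j^{[i]}\bfP_j)$, and the statement as printed should read $\beta_j^{[i]}$ rather than $\beta_j$. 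I would flag this and prove the corrected version.

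Concretely the steps are: (1) recall $\hpub = \hsec\bfP^T$ and pick $\hsec = \Moore_{n-k}(\bfa)$, the Moore matrix with rows $\bfa^{[0]}, \dots, \bfa^{[n-k-1]}$, which is a valid parity-check matrix of $\calG_k(\bfg)$ since its dual is $\calG_{n-k}(\bfa)$; (2) write the $i$-th row of $\hpub$ as $\bfa^{[i]}\bfP^T$; (3) substitute $\bfP^T = \bfP_0 + \sum_{j=1}^{\lambda-1}\beta_j\bfP_j$ and use $\FF_q$-linearity of the $q^i$-power together with $\bfP_j$ having $\FF_q$-entries to pull the matrices through the Frobenius, obtaining $\bfa^{[i]}\bfP_0 + \sum_j \beta_j^{[i]} \bfa^{[i]}\bfP_j = \bfh_0^{[i]} + \sum_j \beta_j^{[i]}\bfh_j^{[i]}$; (4) conclude that these $n-k$ vectors span $\cpubp$ because they are the rows of a generator matrix. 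The only genuine subtlety — and what I'd treat as the main point to get right — is the bookkeeping of the Frobenius exponents on the $\beta_j$ versus on the $\bfh_j$, i.e. making sure the indexing convention in the statement is consistent with $\bfh_j^{[i]} = \bfa^{[i]}\bfP_j$; once that is pinned down the proof is a two-line computation.
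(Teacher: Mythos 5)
Your overall strategy --- write a parity-check matrix of $\calG_k(\bfg)$ as a Moore matrix with rows $\bfa^{[0]},\dots,\bfa^{[n-k-1]}$, multiply by $\bfP^T$, and expand using $\bfP^T = \bfP_0 + \sum_j \beta_j\bfP_j$ --- is essentially the paper's proof (the paper phrases it as evaluation of a skew polynomial $P\langle\bfa\rangle$ of degree $< n-k$, which is the same thing written linearly), and it is the right approach. However, you have introduced a genuine error at the "careful" step and used it to wrongly conclude that the lemma is misstated.

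The $i$-th row of $\hpub = \hsec\bfP^T$ is $\bfa^{[i]}\bfP^T$, which is \emph{not} the same as $(\bfa\bfP^T)^{[i]}$, and you have silently substituted the latter for the former. When you expand $\bfa^{[i]}\bfP^T = \bfa^{[i]}\bigl(\bfP_0 + \sum_{j}\beta_j\bfP_j\bigr)$, the $\beta_j$ are scalar coefficients in front of the $\FF_q$-matrices $\bfP_j$; they come out of the matrix product untouched, giving
\[
\bfa^{[i]}\bfP^T \;=\; \bfa^{[i]}\bfP_0 + \sum_{j=1}^{\lambda-1}\beta_j\,\bfa^{[i]}\bfP_j
\;=\; \bfh_0^{[i]} + \sum_{j=1}^{\lambda-1}\beta_j\,\bfh_j^{[i]},
\]
where the only place the Frobenius/$\FF_q$-linearity argument is used is $\bfa^{[i]}\bfP_j = (\bfa\bfP_j)^{[i]} = \bfh_j^{[i]}$. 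No Frobenius power ever lands on $\beta_j$. The expression you propose as a "correction," $\bfh_0^{[i]} + \sum_j \beta_j^{[i]}\bfh_j^{[i]}$, is instead equal to $(\bfa\bfP^T)^{[i]}$ --- the componentwise $q^i$-power of the first row of $\hpub$ --- which is not a row of $\hpub$ because $\hpub$ is not a Moore matrix (its row space $\cpubp$ is not a Gabidulin code). So the lemma as printed is correct, your proposed amendment is false, and the fix is simply to distinguish between taking the $i$-th row of $\hsec$ and then multiplying by $\bfP^T$ (what the lemma does) versus multiplying the first row by $\bfP^T$ and then applying $[i]$ (what you computed).
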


\begin{proof}
    For any $\bfc \in \cpub^\perp$, there exists $P \in \FF_{q^m}[X;\theta]$ of degree smaller than $n-k$ such that
    \[
        \bfc = P\langle\bfa\rangle\bfP^T = P\langle\bfa\rangle\bfP_0 + \sum\limits_{i=1}^{\lambda-1}\beta_i P\langle\bfa\rangle\bfP_i = P\langle\bfh_0\rangle + \sum\limits_{i=1}^{\lambda-1}\beta_iP\langle\bfh_i\rangle
    \]
\end{proof}

\noindent Let us define the so-called solution set of the encryption scheme

\begin{definition}[Solution set]
\label{Defi:Solutions}
 The set $\mathcal{S}$  of all  $(\bfh,\vec{\beta}) \in (\FF_{q^m}^n)^{\lambda} \times \FF_{q^m}^{\lambda-1}$ such that 
 \begin{equation}
\label{Eq:Attack}
    \cpubp = \left\langle \bfh_0^{[i]} + \sum\limits_{j=1}^{\lambda-1}\beta_j\bfh_j^{[i]},   ~ i = 0,\dots,n-k-1\right\rangle
\end{equation}
where $\forall j = 0,\dots, \lambda, ~\bfh_j$ has rank $n$ and   $\left\langle 1,\beta_1,\ldots,\beta_{\lambda-1} \right\rangle_{\FF_q}$ has dimension $\lambda$ is called solution set of the encryption scheme.
\end{definition}

It is obvious that finding an element of the solution set $\mathcal{S}$ implies the ability to design a polynomial-time decryption algorithm. What we call a reconstructing attack corresponds to finding  an element in  $\mathcal{S}$. The solution set   $\mathcal{S}$ has the following properties.    

\begin{proposition}
\label{prop:solutions}
Let $(\bfh,\vec{\beta}) \in (\FF_{q^m}^n)^{\lambda} \times \FF_{q^m}^{\lambda-1}$. Let $\bfA=  (a_{j,i})_{j,i=0}^{\lambda-1} \in GL_{\lambda}(\FF_{q})$.
Let us define the following group action on $(\FF_{q^m}^n)^{\lambda} \times \FF_{q^m}^{\lambda-1}$ by $\bfA\cdot (\bfh,\vec{\beta}) = (\bfh',\vec{\beta'})$  where 
\[
\left\{
\begin{array}{ll}
\bfh_j = \dfrac{a_{j,0}\bfh'_0 + \sum\limits_{i=1}^{\lambda-1}a_{j,i}\bfh'_i}{a_{0,0} + \sum\limits_{i=1}^{\lambda-1}a_{i,0}\beta_i},  ~j = 0,\dots, \lambda-1 &\\
\beta'_j = \dfrac{a_{0,j} +\sum\limits_{i=1}^{\lambda-1}a_{i,j}\beta_i}{a_{0,0} + \sum\limits_{i=1}^{\lambda-1}a_{i,0}\beta_i}, ~j = 1,\dots, \lambda -1 &
\end{array}
   \right.
\]
\begin{enumerate}
\item Then if 
$(\bfh, \vec{\beta}) \in \mathcal{S}$  we have   $ \bfA\cdot (\bfh, \vec{\beta}) \in \mathcal{S}$ 
\item Moreover let 
$\overline{\bfA} = \{\bfB \in GL_\lambda(\FF_q) | \exists c \in \FF_q^*, \bfB = c\bfA \}$ . 
Then, for any $\bfB \in \overline{\bfA}$, and for any $(\bfh, \vec{\beta}) \in (\FF_{q^m}^n)^{\lambda} \times \FF_{q^m}^{\lambda-1}$ we have

\[
\bfA\cdot (\bfh,\vec{\beta}) = \bfB \cdot (\bfh,\vec{\beta})
\]

\end{enumerate}

\end{proposition}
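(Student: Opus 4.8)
The plan is to verify both claims by direct computation, exploiting the fact that the defining condition of $\mathcal{S}$ in Equation~\eqref{Eq:Attack} depends on $(\bfh,\vec\beta)$ only through the vector space spanned by the $n-k$ vectors $\bfh_0^{[i]} + \sum_j \beta_j \bfh_j^{[i]}$, together with two ``non-degeneracy'' side conditions (each $\bfh_j$ has rank $n$, and $\langle 1,\beta_1,\dots,\beta_{\lambda-1}\rangle_{\FF_q}$ has dimension $\lambda$).

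For part (1), I would first rewrite the action more conceptually. Introduce the element $\mu = a_{0,0} + \sum_{i=1}^{\lambda-1} a_{i,0}\beta_i \in \FF_{q^m}$ (the common denominator) and observe that the formula says precisely $(\bfh_0,\dots,\bfh_{\lambda-1}) = \mu^{-1}(\bfh'_0,\dots,\bfh'_{\lambda-1})\bfA$, i.e. $(\bfh'_0,\dots,\bfh'_{\lambda-1}) = \mu\,(\bfh_0,\dots,\bfh_{\lambda-1})\bfA^{-1}$, so $\bfh'_j$ is an $\FF_q$-linear combination of $\mu\bfh_0,\dots,\mu\bfh_{\lambda-1}$, and likewise $1,\beta'_1,\dots,\beta'_{\lambda-1}$ are obtained from $1,\beta_1,\dots,\beta_{\lambda-1}$ by the same matrix $\bfA$ up to the scalar $\mu^{-1}$. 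The key algebraic identity to check is that
\[
\bfh'^{[i]}_0 + \sum_{j=1}^{\lambda-1} \beta'_j \bfh'^{[i]}_j = \mu^{[i]}\left(\bfh_0^{[i]} + \sum_{j=1}^{\lambda-1}\beta_j \bfh_j^{[i]}\right)
\]
for every $i$; this follows because applying the Frobenius power $[i]$ commutes with $\FF_q$-linear combinations (the $a_{j,i}$ lie in $\FF_q$, hence are fixed), and the cross terms reorganize using that $\bfA$ acts on the ``affine coordinates'' $1,\beta_1,\dots$ and on $\bfh_0,\dots$ by mutually inverse patterns, so the $\bfA$-dependence cancels and only the scalar $\mu^{[i]}$ survives. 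Since multiplying each spanning vector by the nonzero scalar $\mu^{[i]}$ does not change the span (here one uses $\mu \neq 0$, which holds because $\bfA \in GL_\lambda(\FF_q)$ forces the vector $(a_{0,0},a_{1,0},\dots,a_{\lambda-1,0})$ to be nonzero and $1,\beta_1,\dots,\beta_{\lambda-1}$ are $\FF_q$-linearly independent), the span of the primed system equals the span of the unprimed system, which is $\cpubp$. It then remains to check the two side conditions: each $\bfh'_j$ has rank $n$ because it is $\mu$ times an invertible $\FF_q$-linear combination of the $\bfh_i$ (invertibility of the coordinate change being exactly $\bfA \in GL_\lambda$, and $\bfh \mapsto \mu\bfh$ preserving rank since $\mu\neq 0$); and $\langle 1,\beta'_1,\dots,\beta'_{\lambda-1}\rangle_{\FF_q}$ has dimension $\lambda$ because it equals $\mu^{-1}$ times the image of $\langle 1,\beta_1,\dots,\beta_{\lambda-1}\rangle_{\FF_q}$ under the invertible matrix $\bfA$, so its dimension is unchanged.

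For part (2), I would simply substitute $\bfB = c\bfA$ with $c \in \FF_q^*$ into the formulas. Every entry $b_{j,i} = c\,a_{j,i}$, so in the expression for $\bfh_j$ both numerator and denominator scale by $c$ and the factor cancels; likewise for $\beta'_j$. Hence $\bfB\cdot(\bfh,\vec\beta)$ and $\bfA\cdot(\bfh,\vec\beta)$ are given by identical formulas, so they are equal — and note this holds for \emph{all} $(\bfh,\vec\beta)$, not only those in $\mathcal{S}$, as claimed.

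The main obstacle is bookkeeping in part (1): getting the index conventions straight so that the ``$\bfA$ acts on the $\bfh$'s'' and ``$\bfA$ acts on the $\beta$'s'' patterns really are transpose/inverse of each other in the right way for the cross terms to telescope into the single scalar $\mu^{[i]}$. The cleanest route is to package $1,\beta_1,\dots,\beta_{\lambda-1}$ as a row vector $\vec\gamma \in \FF_{q^m}^\lambda$ and $\bfh_0,\dots,\bfh_{\lambda-1}$ as the rows of a matrix $\bfH$, so that the spanning vectors are the rows of $(\text{diag of }[i]\text{-Frobenius applied to } \vec\gamma\bfH)$-type objects; then the action reads $\vec{\gamma'} = \mu^{-1}\vec\gamma\bfA$ and $\bfH' = \mu\,\bfA^{-1}\bfH$ (up to fixing whether $\bfA$ or $\bfA^T$ appears, which one pins down by matching one coordinate against the displayed formulas), giving $\vec{\gamma'}\bfH' = \vec\gamma\bfH$ before Frobenius and $\vec{\gamma'}^{[i]}\bfH'^{[i]} = \mu^{[i]}(\vec\gamma\bfH)^{[i]}$ after, since Frobenius is $\FF_q$-semilinear and fixes $\bfA$. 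Once this matrix reformulation is in place the whole proposition is a two-line verification; I would present it that way rather than in the coordinate-heavy form of the statement.
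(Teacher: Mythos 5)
Your overall approach matches the paper's: both argue by showing that the primed generating vectors differ from the unprimed ones by a nonzero scalar, so the spans coincide, and both justify $\mu\neq 0$ via the $\FF_q$-linear independence of $1,\beta_1,\dots,\beta_{\lambda-1}$ together with the nonvanishing of the first column of $\bfA$. Two small slips, neither fatal to the conclusion: the correct scalar is $\mu^{[i]}/\mu$ (written $\mu^{[i]-1}$ in the paper), not $\mu^{[i]}$ — the extra $1/\mu$ comes from $\beta'_j$ carrying a factor $\mu^{-1}$ that is \emph{not} Frobenius-raised, since the generating vectors apply $[i]$ only to the $\bfh_j$, not to the $\beta_j$. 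This is also where your matrix reformulation goes astray: the generating vectors are $\vec\gamma\,\bfH^{[i]}$, not $(\vec\gamma\bfH)^{[i]}$, so the correct packaging is $\vec\gamma'(\bfH')^{[i]} = (\mu^{[i]}/\mu)\,\vec\gamma\,\bfH^{[i]}$; your displayed line $\vec\gamma'^{[i]}\bfH'^{[i]} = \mu^{[i]}(\vec\gamma\bfH)^{[i]}$ is in fact inconsistent with your own preceding line $\vec\gamma'\bfH' = \vec\gamma\bfH$, which would give scalar $1$ after applying $[i]$ to both sides, not $\mu^{[i]}$. On the positive side, your proof explicitly verifies the two membership conditions in the definition of $\calS$ (each $\bfh'_j$ has rank $n$, and $\langle 1,\beta'_1,\dots,\beta'_{\lambda-1}\rangle_{\FF_q}$ has dimension $\lambda$), which the paper's proof leaves implicit, and your treatment of part (2) by directly substituting $\bfB=c\bfA$ is cleaner and avoids a notational tangle present in the paper's argument.
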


\begin{proof}
Let $(\bfh,\vec{\beta}) \in \mathcal{S}$. 
Since from the definition of  $\mathcal{S}$ the elements $1,\beta_1,\ldots,\beta_{\lambda-1}$ are $\FF_q$-linearly independent, and since $\bfA$ is non singular, $(a_{0,0},\ldots, a_{0,\lambda-1}) \ne 0$
 this implies that $a_{0,0} + \sum\limits_{i=1}^{\lambda-1}a_{0,j}\beta_i \ne 0$.
 Therefore, the elements $\beta'_1,\ldots,\beta'_{\lambda-1}$ are well defined and for all  $0 \le \ell \le n-k-1$,
    
    \begin{equation*}
        \begin{aligned}
            &{\bfh'}_0^{[\ell]} + \sum\limits_{j=1}^{\lambda-1}\beta'_j{\bfh'}_j^{[\ell]}\\
            =& \dfrac{1}{  a_{0,0} + \sum\limits_{i=1}^{\lambda-1}a_{i,0}\beta_i}\left(\left(   a_{0,0} + \sum\limits_{i=1}^{\lambda-1}a_{i,0}\beta_i    \right){\bfh'}_0^{[\ell]} + \sum\limits_{j=1}^{\lambda-1}\left(         
            a_{0,j} +\sum\limits_{i=1}^{\lambda-1}a_{i,j}\beta_i \right){\bfh'}_j^{[\ell]}\right)\\            
            =& \dfrac{1}{  a_{0,0} + \sum\limits_{i=1}^{\lambda-1}a_{i,0}\beta_i}\left(\left(a_{0,0}\bfh'_0 + \sum\limits_{j=1}^{\lambda-1}a_{0,j}\bfh'_j\right)^{[\ell]} + \sum\limits_{i=1}^{\lambda-1}\beta_i\left(a_{i,0}\bfh'_0 + \sum\limits_{j=1}^{\lambda-1}a_{i,j}\bfh'_j\right)^{[\ell]}\right)\\            
            =& \left(a_{0,0} + \sum\limits_{i=1}^{\lambda-1}a_{i,0}\beta_i\right)^{[\ell]-1}\left(\bfh_0^{[\ell]} + \sum\limits_{i=1}^{\lambda-1}\beta_i\bfh_i^{[\ell]} \right)
        \end{aligned}
    \end{equation*}
    Therefore  
    \[ 
        \cpubp \stackrel{def}{=}   \left\langle \bfh_0^{[\ell]} + \sum\limits_{i=1}^{\lambda-1}\beta_i\bfh_i^{[\ell]}, ~ 0 \le \ell \le n-k-1 \right\rangle  = \left\langle {\bfh'}_0^{[\ell]} + \sum\limits_{i=1}^{\lambda-1}\beta'_i{\bfh'}_i^{[\ell]}, ~ 0 \le \ell \le n-k-1 \right\rangle
          \]
   Thus,  $(\bfh',\vec{\beta'}) \in \calS$.

For the second point of the proposition : Let $\bfB \in \overline{\bfA}$. It means that there exists $c \in \FF_q^*$ such that $\bfB\bfA^{-1} = c\bfI$ or equivalently $\bfB = c\bfA$.

Let $(\bfh'',\vec{\beta''}) \stackrel{def}{=} \bfB \cdot (\bfh',\vec{\beta'})$ , where $ (\bfh',\vec{\beta'}) = \bfA\cdot (\bfh,\vec{\beta})\in (\FF_{q^m}^n)^{\lambda} \times \FF_{q^m}^{\lambda-1}$. It means that

\[
\left\{
\begin{array}{l}
\bfh_j' = \dfrac{b_{j,0}\bfh''_0 + \sum\limits_{i=1}^{\lambda-1}b_{j,i}\bfh''_i}{b_{0,0} + \sum\limits_{i=1}^{\lambda-1}b_{i,0}\beta_i},  ~j = 0,\dots, \lambda-1 \\
\beta''_i = \dfrac{b_{0,j} +\sum\limits_{i=1}^{\lambda-1}b_{i,j}\beta_i}{b_{0,0} + \sum\limits_{i=1}^{\lambda-1}b_{i,0}\beta_i}, ~j = 1,\dots, \lambda -1
\end{array}
   \right.
\]

It is obvious that the actions of 2 matrix $\bfA,\bfB$ give us same image, which is $(\bfh'',\vec{\beta''}) = (\bfh',\vec{\beta'})$.

\end{proof}

\section{Attacks on the system}

Now the section is organised as follows: In a first part we make a brief summary of the attack. Since it is technical, this section highlights the different principles. In first subsection, we introduce the distinguisher between the dual of the public code and the random codes. Afterwards, we make some assumptions in the beginning and under these assumptions, we exploit the attack based on the distinguisher introduced in 2.1. We also consider the special case of $\lambda=3$ and analyze its complexity in Section 2.3.

We will also need to introduce a special setting to simplify the technicalities of the proofs. For any $(\bfh,\vec{\beta}) \in \mathcal{S}$, we denote  
\[
\bfy_{(\bfh,\vec{\beta})}^{[u,j]} = \bfh_0^{[j]} + \sum\limits_{i=1}^{\lambda-1} \beta_i^{[u]}\bfh_i^{[j]}
\]
for any integers $(u,j)$. For a given $(\bfh,\vec{\beta})$, to simplify, we denote it by $\bfy^{[u,j]}$. Moreover, we can denote $\bfy^{[M]} = \{\bfy^{[u,j]}, (u,j) \in M \subset \mathbb{Z} \times \mathbb{Z}\}$. The codes that we will consider will be generated by the $\bfy^{[u,j]}$, where $(u,j) \in \mathbb{Z} \times \mathbb{Z}$. 
Let $I \subset  \mathbb{Z} \times \mathbb{Z}$. We denote by 
\[
\calC_I \stackrel{def}{=} \left\langle \bfy^{[u,j]}, (u,j) \in I \right\rangle,
\]
if $I$ is non-empty and $\calC_{\emptyset} \stackrel{def}{=} \{0\}$.
From the expression of $\cpubp$ in Definition \ref{Defi:Solutions}, we have
\begin{equation}
\label{Eq:CpubIndices}
\cpubp = \calC_{\{0\}\times [0,\ldots,n-k-1]}
\end{equation}

Now we introduce a very fundamental theorem which will support all of our future proofs
\begin{theorem}[CodeSet theorem]
\label{Theo:CodeSet}
 We have 
\[
\forall I,~J \subset  \mathbb{Z}\times \mathbb{Z},~ 
\left\{ 
\begin{array}{l}
\calC_{I \cup J} = \calC_I + \calC_J \\
\calC_{I \cap J} \subset \calC_I \cap \calC_J 
\end{array}
\right.
\]
If moreover $M \subset \mathbb{Z}\times \mathbb{Z}$, where  $\bfy^{[u,j]},~ (u,j) \in M$ are $\FF_q^m$  linearly independent, then
\begin{itemize}
\item for all $I \subset M, ~\dim(\calC_I) = |I|$
\item  $\forall I,~J \subset M,~ \calC_I \cap \calC_J = \calC_{I \cap J}$
\item $\forall I,~J \subset M, ~ \calC_{I \sqcup J} = \calC_I \oplus \calC_J$, where $\sqcup$ means that the sets do not intersect.
\end{itemize}
\end{theorem}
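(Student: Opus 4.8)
The plan is to reduce the entire statement to elementary linear algebra about the spans of subfamilies of the fixed indexed family $\{\bfy^{[u,j]}\}_{(u,j)\in\mathbb{Z}\times\mathbb{Z}}$. Throughout I will use only that, for any index set $I$, the vectors $\bfy^{[u,j]}$ with $(u,j)\in I$ form a spanning set of $\calC_I$, together with the convention $\calC_{\emptyset}=\{0\}$. I will treat the two unconditional identities first, then add the linear-independence hypothesis.

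For the unconditional part: the equality $\calC_{I\cup J}=\calC_I+\calC_J$ is immediate, since $\{\bfy^{[u,j]}:(u,j)\in I\cup J\}=\{\bfy^{[u,j]}:(u,j)\in I\}\cup\{\bfy^{[u,j]}:(u,j)\in J\}$ and the span of a union of two sets is the sum of their spans. For the inclusion $\calC_{I\cap J}\subseteq\calC_I\cap\calC_J$, I observe that $I\cap J\subseteq I$ and $I\cap J\subseteq J$, so every generator $\bfy^{[u,j]}$ with $(u,j)\in I\cap J$ lies in both $\calC_I$ and $\calC_J$; hence their span does too.

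Now assume $\{\bfy^{[u,j]}:(u,j)\in M\}$ is $\FF_{q^m}$-linearly independent and fix $I,J\subseteq M$. Since any subfamily of a linearly independent family is again linearly independent, the $|I|$ generators of $\calC_I$ form a basis of it, giving $\dim\calC_I=|I|$. The crux is the reverse inclusion $\calC_I\cap\calC_J\subseteq\calC_{I\cap J}$. Take $\bfv\in\calC_I\cap\calC_J$ and write it once as $\bfv=\sum_{(u,j)\in I}c_{u,j}\bfy^{[u,j]}$ and once as $\bfv=\sum_{(u,j)\in J}d_{u,j}\bfy^{[u,j]}$. Subtracting produces a linear dependence among the $\bfy^{[u,j]}$ with $(u,j)\in I\cup J\subseteq M$, so by linear independence every coefficient must vanish; this forces $c_{u,j}=0$ for $(u,j)\in I\setminus J$, $d_{u,j}=0$ for $(u,j)\in J\setminus I$, and $c_{u,j}=d_{u,j}$ for $(u,j)\in I\cap J$. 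Hence $\bfv=\sum_{(u,j)\in I\cap J}c_{u,j}\bfy^{[u,j]}\in\calC_{I\cap J}$. Combined with the unconditional inclusion, this yields $\calC_I\cap\calC_J=\calC_{I\cap J}$. This coefficient-matching step is the only place where any care is required, and it is exactly where linear independence is essential: without that hypothesis the equality genuinely fails (take three generators with $\bfy^{[c]}=\bfy^{[a]}+\bfy^{[b]}$, $I=\{a,c\}$, $J=\{b,c\}$).

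Finally, for the direct-sum statement, let $I,J\subseteq M$ be disjoint. Then $\calC_{I\sqcup J}=\calC_I+\calC_J$ by the first identity, while $\calC_I\cap\calC_J=\calC_{I\cap J}=\calC_{\emptyset}=\{0\}$ by the intersection equality just proved, using $I\cap J=\emptyset$. A sum of two subspaces meeting only in $\{0\}$ is direct, so $\calC_{I\sqcup J}=\calC_I\oplus\calC_J$. I do not expect any real obstacle in carrying this out: the theorem is a convenient packaging of standard facts about spans of indexed vector families, the only mildly delicate point being the bookkeeping in the intersection argument above.
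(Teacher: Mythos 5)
Your proof is correct and follows essentially the same route as the paper: spans of unions for the additive identity, inclusion of generators for the one-sided intersection inclusion, and the coefficient-matching argument under linear independence for the equality, with the direct-sum claim as a corollary of the case $I\cap J=\emptyset$.
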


\begin{proof}
The code generating set for $\calC_I + \calC_J$ is the union of generating sets for $\calC_I$ and $\calC_J$, since 
$\bfy^{[I]}$ and $\bfy^{[J]}$ are generating sets respectively for $\calC_I$ and $\calC_J$ then $\bfy^{[I\cup J]}$ is a generating set for $\calC_I + \calC_J$. Hence $\calC_{I \cup J} = \calC_I + \calC_J$. Now the generating set $\bfy^{[I \cap J]}$ of $\calC_{I \cap J}$ is included in the generating set of $\calC_I$ and of $\calC_J$. Therefore $ \calC_{I \cap J} \subset \calC_I \cap \calC_J$.

Let us consider now $M$ such that $\bfy^{[M]}$ is formed with linearly independent vectors. It is immediate that for any $I \subset M$, a basis of $\calC_I$ is $\bfy^{[I]}$, therefore the dimension of $\calC_I$ is exactly equal to $|I|$. Let $\bfc \in  \calC_I \cap \calC_J$. We have 
\[
\bfc = \sum_{(u,j) \in I}{c_{u,j}\bfy^{[u,j]}} = \sum_{(u,j) \in I \setminus J}{c_{u,j}\bfy^{[u,j]}} +  \sum_{(u,j) \in I \cap J}{c_{u,j}\bfy^{[u,j]}}
\]
and similarly 
\[
\bfc = \sum_{(u,j) \in J}{c'_{u,j}\bfy^{[u,j]}} = \sum_{(u,j) \in J \setminus I}{c'_{u,j}\bfy^{[u,j]}} +  \sum_{(u,j) \in I \cap J}{c'_{u,j}\bfy^{[u,j]}}
\]
Since by hypotheses on $M$, the $\bfy^{[M]}$ are linearly independent, this implies the equality of the coefficients on this bases and thus that $c_{u,j} = 0$, for $(u,j) \in  I \setminus J$. Therefore, $\bfc \in \calC_{I \cap J}$.

The last item comes from the fact that is $I$ and $J$ do not intersect then $I\cap J = \emptyset$, thus $\calC_I \cap \calC_J = \{ 0 \}$.

\end{proof}

\subsection{A distinguishing attack in the general case}
\label{Sec:Dist}

We show that if $n,k,\lambda$ satisfy 
$k >  \frac{(\lambda-1)n}{\lambda} +1$, then one can distinguish the public-code from a random code in polynomial time.
First we prove the following theorem.

\begin{theorem}
\label{Theo:Dist} $ \dim_{\FF_{q^m}} \left( \cpubp + \cpubp^{[1]} + \dots + \cpubp^{[\lambda]} \right) \leq \lambda(n-k)+\lambda$ 
\end{theorem}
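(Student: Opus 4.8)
The plan is to exploit the explicit description of $\cpubp$ from Lemma~1 together with the index-set bookkeeping of the CodeSet theorem (Theorem~\ref{Theo:CodeSet}). Recall that $\cpubp = \calC_{\{0\}\times[0,\ldots,n-k-1]}$ via \eqref{Eq:CpubIndices}, and more concretely $\cpubp$ is spanned by the $n-k$ vectors $\bfh_0^{[j]} + \sum_{i=1}^{\lambda-1}\beta_i \bfh_i^{[j]}$ for $j=0,\ldots,n-k-1$. Applying the Frobenius power $[\,u\,]$ to a generator of $\cpubp$, and using that $(\bfh_i^{[j]})^{[u]} = \bfh_i^{[j+u]}$ while $\beta_i \mapsto \beta_i^{[u]}$, one sees that $\cpubp^{[u]}$ is spanned by $\bfh_0^{[j]} + \sum_{i=1}^{\lambda-1}\beta_i^{[u]}\bfh_i^{[j]}$ for $j=u,\ldots,u+n-k-1$; in the notation of the excerpt this is exactly $\calC_{\{u\}\times[u,\ldots,u+n-k-1]}$. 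Hence by the first (unconditional) part of the CodeSet theorem,
\[
\cpubp + \cpubp^{[1]} + \cdots + \cpubp^{[\lambda]} = \calC_{\,\bigcup_{u=0}^{\lambda}\{u\}\times[u,\ldots,u+n-k-1]\,}.
\]

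The next step is to bound the dimension of this code by a cardinality count on a \emph{larger} generating set. Each generator $\bfy^{[u,j]} = \bfh_0^{[j]} + \sum_{i=1}^{\lambda-1}\beta_i^{[u]}\bfh_i^{[j]}$ depends on $u$ only through the tuple of scalars $(\beta_1^{[u]},\ldots,\beta_{\lambda-1}^{[u]})$; so for a \emph{fixed} value of $j$, as $u$ ranges over $\{0,\ldots,\lambda\}$ the vectors $\bfy^{[u,j]}$ all lie in the $\FF_{q^m}$-span of the $\lambda$ fixed vectors $\bfh_0^{[j]}, \bfh_1^{[j]}, \ldots, \bfh_{\lambda-1}^{[j]}$. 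Therefore $\calC_{\{u\}\times\{j\} : 0\le u\le \lambda,\ j \text{ fixed}} \subseteq \langle \bfh_0^{[j]},\ldots,\bfh_{\lambda-1}^{[j]}\rangle$, a space of dimension at most $\lambda$. Now the index $j$ occurring in $\bigcup_{u=0}^{\lambda}\{u\}\times[u,\ldots,u+n-k-1]$ ranges over $[0,\ldots,n-k-1+\lambda]$, i.e. over $n-k+\lambda$ distinct values. Summing the per-$j$ bound:
\[
\dim_{\FF_{q^m}}\!\Big(\cpubp + \cdots + \cpubp^{[\lambda]}\Big) \le \sum_{j=0}^{n-k-1+\lambda} \dim\langle \bfh_0^{[j]},\ldots,\bfh_{\lambda-1}^{[j]}\rangle \le \lambda\,(n-k+\lambda).
\]

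This already gives $\lambda(n-k) + \lambda^2$, which is weaker than the claimed $\lambda(n-k)+\lambda$. The refinement — and this is the step I expect to be the main obstacle — is to notice that for the \emph{extreme} values of $j$ the set of admissible $u$ is much smaller: the union of intervals $[u, u+n-k-1]$ for $u = 0,\ldots,\lambda$ contains $j=0$ only for $u=0$, $j=1$ only for $u\in\{0,1\}$, and in general the value $j$ appears paired with only $\min(j+1,\ \lambda+1,\ n-k+\lambda-j)$ values of $u$. Assuming $n-k \ge \lambda$ (which holds under the parameter regime of interest, since $k > \frac{(\lambda-1)n}{\lambda}+1$ forces $n-k$ small but one still has $n-k\ge 2\lambda t \ge \ldots$; in any case this mild hypothesis is implicit), the multiplicity profile over $j=0,1,\ldots,n-k-1+\lambda$ is $1,2,\ldots,\lambda,\lambda+1,\lambda+1,\ldots,\lambda+1,\lambda,\ldots,2,1$. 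For the $2\lambda$ "boundary" values of $j$ the span of $\{\bfh_i^{[j]}\}$ restricted to the relevant $u$'s has dimension at most $\min(j+1,\lambda)$ rather than $\lambda$, and one checks $\sum_{j}(\text{actual multiplicity bound}) = \lambda(n-k) + \lambda$ exactly. The delicate point to get right is the exact count of how many $(u,j)$ pairs survive and the corresponding dimension bound $\dim\langle \bfh_0^{[j]},\ldots,\bfh_{r-1}^{[j]}\rangle \le r$ when only $r$ of the exponents $u$ are used; I would carry this out by first rewriting $\bigcup_{u=0}^\lambda \{u\}\times[u,u+n-k-1]$ as a disjoint union indexed by $j$, then bounding each fibre, then summing the arithmetic-progression-shaped multiplicity sequence.
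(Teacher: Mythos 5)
Your proposal is correct and takes a genuinely different route from the paper. The paper's proof sets $\calS_0 = \sum_{i=0}^{\lambda-1}\cpubp^{[i]}$, observes $\dim \calS_0 \le \lambda(n-k)$, and then uses the Vandermonde-type nonvanishing of $\det(\beta_i^{[j]})$ (i.e.\ $\FF_q$-linear independence of $1,\beta_1,\ldots,\beta_{\lambda-1}$) to show that $\calC_{\{\lambda\}\times[\lambda,n-k-1]}$ sits inside $\calS_0 \cap \cpubp^{[\lambda]}$, so that $\dim(\calS_0\cap\cpubp^{[\lambda]}) \ge n-k-\lambda$; the result then follows from the dimension formula for a sum of two subspaces. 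You instead fibre the combined index set $\bigcup_{u=0}^{\lambda}\{u\}\times[u,u+n-k-1]$ over the column $j$, bound the span of each fibre by $\min(\text{fibre size},\lambda)$ using only the trivial fact that the fibre lies in $\langle\bfh_0^{[j]},\ldots,\bfh_{\lambda-1}^{[j]}\rangle$, and sum the multiplicity profile $1,2,\ldots,\lambda,\lambda+1,\ldots,\lambda+1,\lambda,\ldots,2,1$; the arithmetic $\tfrac{\lambda(\lambda+1)}{2}+\lambda(n-k-\lambda)+\tfrac{\lambda(\lambda+1)}{2} = \lambda(n-k)+\lambda$ does check out under your stated hypothesis $n-k\ge\lambda$ (and the bound is trivially true otherwise, since there are only $(\lambda+1)(n-k)\le\lambda(n-k)+\lambda$ generators in that case). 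What your approach buys is that it sidesteps the Vandermonde nondegeneracy argument entirely and is a pure counting bound; what the paper's approach buys is that it exhibits the \emph{exact} large overlap $\calS_0\cap\cpubp^{[\lambda]}\supseteq\calC_{\{\lambda\}\times[\lambda,n-k-1]}$, and that identification (together with the independence of the $\beta_i$) is reused repeatedly in the reconstructing attack — so the paper's proof is doing double duty that yours would not.
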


\begin{proof}
    Let $\calS_0  \stackrel{def}{=} \sum\limits_{i=0}^{\lambda-1} \cpubp^{[i]}$. We want to show that 
     $\dim\left(\calS_0 + \cpubp^{[\lambda]}\right) \le \lambda(n-k)+\lambda$. 
     For any $(\bfh,\vec{\beta}) \in \calS$ from the expression of  $\cpubp$ under the form (\ref{Eq:CpubIndices}) and from the CodeSet theorem  we obtain
     \[
\calS_0 =   \mathcal{C}_{S_0},\mbox{ where }S_0 =  \bigsqcup_{u=0}^{\lambda-1}{\{ u\}\times [u,n-k+u-1]} 
 \] 
    and 
     \[
 \cpubp^{[\lambda]} = \mathcal{C}_{\{\lambda\} \times [\lambda,n-k+\lambda-1]}
 \]

Let 
\[ I = \bigsqcup_{u=0}^{\lambda-1}{\{ u\}\times [\lambda-1,n-k-1]} = [0,\lambda-1] \times [\lambda-1,n-k-1] \] 
We have clearly  $I \subset S_0$, implying  $\calC_I \subset \calS_0$. 

By the hypotheses on $\calS$ the $(1,\beta_1,\ldots,\beta_{\lambda-1})$ are linearly independent over $\FF_q$. Thus, 
\[\det \begin{bmatrix}
  ~ & 1 & \beta_1 &\ldots & \beta_{\lambda-1} \\ 
  ~ & 1 & \beta_1^{[1]}  &\ldots &\beta_{\lambda-1}^{[1]}\\ 
  ~ & \vdots &  \vdots &\ddots &\vdots\\
  ~ & 1 & \beta_1^{{[\lambda-1]}} &\ldots &\beta_{\lambda-1}^{{[\lambda-1]}}
\end{bmatrix} \ne 0,
\] 
it implies that for any $j \in \lambda-1,\dots,n-k-1$, $\calC_{[0,\lambda-1] \times \{j\}} = \langle \bfh_i^{[j]}, 0 \le i \le \lambda-1 \rangle$. Hence,
\[
    \calC_I 
    = \left\langle \bfh_i^{[j]}, \begin{array}{l}
        0 \le i \le \lambda-1  \\
 		\lambda-1 \le j \le n-k-1 
 	  \end{array}
 \right\rangle
\]
In particular from the structure of $\bfy^{[u,j]}$ for any $J \subset * \times [\lambda-1,n-k-1]$, we have 
$\calC_{J} \subset \calC_I \subset \calS_0$.

Thus, 
$\mathcal{C}_{ \{\lambda\} \times [\lambda,n-k-1]} \subset \calS_0 \cap \cpubp^{[\lambda]}$. 
From its construction, $\cpubp^{[\lambda]}$ has dimension $n-k$. The vectors $\bfy^{[\lambda,j]},j \in [\lambda,n-k+\lambda-1] $ are linearly independent and from the CodeSet theorem,  $\mathcal{C}_{ \{\lambda\} \times [\lambda,n-k-1]}$ has dimension $n-k-\lambda$. Therefore, $\dim(\calS_0 \cap \cpubp^{[\lambda]}) \ge n-k-\lambda$. 
Conversely, 
 \[
\begin{array}{rcl}
\dim\left(\calS_0 + \cpubp^{[\lambda]}\right) &=&  \dim(\calS_0) + \dim(\cpubp^{[\lambda]}) - \dim(\calS_0 \cap \cpubp^{[\lambda]})
\\
& \le &\lambda(n-k) + (n-k) - (n-k-\lambda) = \lambda(n-k) + \lambda 
\end{array}
\]
\end{proof}
Now the distinguishing attack comes from this proposition 
\begin{proposition}[\cite{CC19} Proposition 2 ]
\label{Prop:RandCode}
If $\calC_{rand}$ is a random code of length $n$ and dimension $k$, then for a non-negative integer $a$ and a positive $\lambda<k$, we have
\[
    \mathbb{P}\left(\dim_{\FF_{q^m}}\left( \calC_{rand}+\calC_{rand}^{[1]}+\dots+\calC_{rand}^{[\lambda]}\right) \leq \min(n,(\lambda+1)k)-a\right) = O(q^{-ma}) .
\]
\end{proposition}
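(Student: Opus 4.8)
The plan is to translate the statement into a rank estimate for a structured random matrix and then run a union bound over $\FF_{q^m}$-subspaces. Realise $\calC_{rand}$ as the row space of a uniformly random $\bfG \in \FF_{q^m}^{k \times n}$ (the event that $\bfG$ is not of full rank $k$ has probability $O(q^{-m(n-k+1)})$ and is absorbed into the error term). Writing $\bfG^{[i]}$ for the entrywise $q^{i}$-th power of $\bfG$, one has $\calC_{rand} + \calC_{rand}^{[1]} + \dots + \calC_{rand}^{[\lambda]} = \rowspan \bfG^{*}$, where $\bfG^{*}$ is the $(\lambda+1)k \times n$ matrix obtained by stacking $\bfG, \bfG^{[1]}, \dots, \bfG^{[\lambda]}$; so the quantity to bound is $\rk \bfG^{*}$, i.e.\ the $\FF_{q^m}$-dimension of the span of the columns $\bfc^{(1)}, \dots, \bfc^{(n)}$ of $\bfG^{*}$. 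These columns are $\bfc^{(j)} = \varphi(\bfg^{(j)})$ for the injective $\FF_q$-linear map $\varphi \colon \bfg \mapsto (\bfg, \bfg^{[1]}, \dots, \bfg^{[\lambda]})$, where $\bfg^{(1)}, \dots, \bfg^{(n)}$, the columns of $\bfG$, are i.i.d.\ uniform in $\FF_{q^m}^{k}$. Put $V \mydef \varphi(\FF_{q^m}^{k})$, an $\FF_q$-subspace of $\FF_q$-dimension $mk$; since $X, X^{q}, \dots, X^{q^{\lambda}}$ are $\FF_{q^m}$-linearly independent as functions on $\FF_{q^m}$ (using $\lambda < m$, which holds in the relevant range), a Moore-matrix computation gives $\langle V \rangle_{\FF_{q^m}} = \FF_{q^m}^{(\lambda+1)k}$, so the columns can span as much as $d \mydef \min\bigl(n, (\lambda+1)k\bigr)$ dimensions. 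The goal is thus $\PP\bigl(\rk\bfG^{*} \le d - a\bigr) = O(q^{-ma})$.

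Since $\rk \bfG^{*} \le d - a$ forces all columns into some $\FF_{q^m}$-subspace $U \subseteq \FF_{q^m}^{(\lambda+1)k}$ of dimension $d - a$, and since the $\bfc^{(j)}$ are i.i.d.\ with $\PP(\bfc^{(1)} \in U) = |V \cap U| / q^{mk}$, a union bound gives
\[
\PP\bigl(\rk\bfG^{*} \le d - a\bigr) \ \le\ \sum_{\dim_{\FF_{q^m}} U = d-a} \Bigl( \tfrac{|V \cap U|}{q^{mk}} \Bigr)^{n}.
\]
For a subspace $U$ in general position relative to $V$ a dimension count gives $\dim_{\FF_q}(V \cap U) = mk - m\rho$ with $\rho \mydef (\lambda+1)k - (d-a)$, so the summand equals $q^{-mn\rho}$; and the number of $(d-a)$-dimensional $\FF_{q^m}$-subspaces is $\qbin{(\lambda+1)k}{d-a}{q^{m}} \le q^{m(d-a)\rho}$. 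Their product is $q^{m\rho(d-a-n)}$, which in both regimes — $(\lambda+1)k \le n$, where $\rho = a$, and $(\lambda+1)k > n$, where $\rho = (\lambda+1)k - n + a$ — is at most $q^{-ma}$ because $d \le n$ and $\rho \ge a \ge 1$. Hence the contribution of generic $U$ is already $O(q^{-ma})$.

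The step I expect to be the main obstacle is the remaining, non-generic subspaces. If $\dim_{\FF_q}(V \cap U) = mk - m\rho + \delta$ with $\delta \ge 1$, the corresponding summand is inflated by $q^{n\delta}$, so one must show that such $U$ make up only a $q^{-m\delta}$ (or smaller) fraction of all $(d-a)$-dimensional subspaces — a count that genuinely uses the geometry of $V$. Concretely, write $V \cap U = \varphi^{-1}(U)$ as the common kernel of the $\FF_q$-linear forms $\bfg \mapsto \sum_{i=0}^{\lambda} \langle \bfw_i, \bfg^{[i]} \rangle$ for $\bfw = (\bfw_0, \dots, \bfw_\lambda)$ ranging over $U^{\perp}$: each nonzero such form has $\FF_q$-rank at least $m - \lambda$ (its restriction to some coordinate is a nonzero linearized polynomial of $q$-degree at most $\lambda$, hence with root space of dimension at most $\lambda$), and the fact that $V$ is the graph of the Frobenius map $\bfg \mapsto (\bfg^{[1]}, \dots, \bfg^{[\lambda]})$ keeps these forms independent enough that an atypically large $V \cap U$ is rare. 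Making precise this stratification of $\FF_{q^m}$-subspaces by $\dim_{\FF_q}(V \cap U)$ is the technical heart of the estimate — it is exactly Proposition 2 of \cite{CC19}, whose argument I would follow — after which one is left only with summing geometric series in $\delta$ and in $a$.
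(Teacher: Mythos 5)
The paper does not prove this proposition; it simply cites it as Proposition~2 of~\cite{CC19}, so there is no internal argument to compare against. Judged as a standalone reconstruction, your setup is the natural one and is correct as far as it goes: rewriting $\dim\bigl(\calC_{rand}+\cdots+\calC_{rand}^{[\lambda]}\bigr)$ as $\rk \bfG^{*}$ for the stacked matrix, observing that the columns of $\bfG^{*}$ are i.i.d.\ uniform on the $\FF_q$-linear Frobenius graph $V=\varphi(\FF_{q^m}^k)$, running a union bound over $(d-a)$-dimensional $\FF_{q^m}$-subspaces $U$, and checking that the generic stratum already contributes $O(q^{-ma})$ — all of this is sound, and the two-regime bookkeeping for $\rho$ and $d$ checks out.

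The difficulty is the one you flag yourself, and it is a genuine gap rather than a deferred technicality. The union bound $\sum_U \bigl(|V\cap U|/q^{mk}\bigr)^n$ only closes once you control the stratification of subspaces $U$ by the excess $\delta=\dim_{\FF_q}(V\cap U)-(mk-m\rho)$: each unit of $\delta$ inflates a summand by $q^{n\delta}$, so you need the number of $U$ at excess $\delta$ to decay at least like $q^{-n\delta}$ against the total $\qbin{(\lambda+1)k}{d-a}{q^m}$. You sketch the right ingredient — each nonzero $\FF_q$-form $\bfg\mapsto\sum_i\langle\bfw_i,\bfg^{[i]}\rangle$ cutting $V$ has kernel of $\FF_q$-codimension at least $m-\lambda$, by the root bound for a nonzero linearized polynomial of $q$-degree $\le\lambda$ — but a single-form bound does not by itself give the count of bad subspaces $U$ (one must show that the $\rho$ independent forms coming from a basis of $U^\perp$ do not collude to produce a large common kernel too often). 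At precisely this point you write that you would ``follow the argument of Proposition~2 of~\cite{CC19}'' — which is circular, since that is the statement you are proving. What is missing, and what a complete proof must supply, is a quantitative lemma of the form: the number of $(d-a)$-dimensional $\FF_{q^m}$-subspaces $U$ with $\dim_{\FF_q}(V\cap U)\ge mk-m\rho+\delta$ is at most $q^{m(d-a)\rho-c\delta}$ for some $c>n$, uniformly in $\delta\ge 1$, together with the geometric summation over $\delta$ and $a$. Until that lemma is stated and proved, the argument does not close.
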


Now whenever  $k >   \frac{(\lambda-1)n}{\lambda} + 1$, the dimension of $\calC_{rand}+\calC_{rand}^{[1]}+\dots+\calC_{rand}^{[\lambda]}$ is very probably equal to $(\lambda+1)(n-k)$ whereas the dimension of $ \cpubp + \cpubp^{[1]} + \dots + \cpubp^{[\lambda]}$ is probably equal to $\lambda(n-k+1)$ (since $\lambda(n-k+1) <n$) , which is strictly less than $(\lambda+1)(n-k)$.

\subsection{Reconstructing attack}
We suppose that the public code has rate larger than $(\lambda-1)/\lambda$, so that the distinguisher introduced in Section \ref{Sec:Dist} works on it.


Although the attack we describe should work heuristically, to have rigorous proofs of work we  need the following assumptions, which are not very contraining 

\begin{enumerate}[label=(\arabic*)]
	
    \item \label{asp:1}There  exists an element $(\bfh,\vec{\beta}) \in \mathcal{S}$ such that 
    $\forall i_1,\dots,i_{\lambda} \in\{1,\ldots,n-k-1\}$ distinct.
    \[
    \det\begin{bmatrix}
~ & 1 & \beta_1^{[i_1]} &\beta_2^{[i_1]} &\ldots & \beta_{\lambda-1}^{[i_1]} \\ 
  ~ & 1 & \beta_1^{[i_2]} & \beta_2^{[i_2]} &\ldots &\beta_{\lambda-1}^{[i_2]}\\ 
  ~ & \vdots & \vdots & \vdots &\ddots &\vdots\\
  ~ & 1 & \beta_1^{{[i_{\lambda]}}} &\beta_2^{{[i_{\lambda}]}} &\ldots &\beta_{\lambda-1}^{{[i_{\lambda}]}}
\end{bmatrix} \ne 0, 
\]
    \item \label{asp:2}$\dim_{\FF_{q^m}}\cpubp + \cpubp^{[1]} + \cpubp^{[2]} + \dots + \cpubp^{[\lambda]} = \lambda(n-k)+\lambda$
    \item \label{asp:3} There is no $\overline{\bfA} \in \textbf{PGL}(\lambda,\FF_q) \backslash \overline{\bfI_{\lambda}}$ and $\bfA=  (a_{ij})_{i,j=1}^{\lambda} $ that satisfies
    
    \[
        \beta_j = \dfrac{a_{0,j} +\sum\limits_{i=1}^{\lambda-1}a_{i,j}\beta_i}{a_{0,0} + \sum\limits_{i=1}^{\lambda-1}a_{i,0}\beta_i}, ~\forall j = 1,\dots, \lambda -1
    \]
    
\end{enumerate}
The first step of the attack is dedicated to finding one dimensional vector-spaces
$\calA_i$ for $i=1,\ldots,n-k-1$, such that any element  $(\bfh,\vec{\beta}) \in \mathcal{S}$ satisfies: 
\[
 \calA_i = \left\langle \bfh_0 + \sum_{j=1}^{\lambda-1} \beta_j^{[-i]}\bfh_j \right\rangle
 \]
 
From the $\calA_i$'s, one obtains a system of $\lambda-1$ multivariate polynomials which are of degree $q^{\lambda+1}-q^i$ for $i=1,\dots,\lambda-1$ satisfied by all the vectors $\vec{\beta}$ such that   $(\bfh,\vec{\beta}) \in \mathcal{S}$. Under the above assumptions, we can also prove the stabilization of the set of solution $\calS$ under the action of \textbf{PGL}$(\lambda,\FF_q)$ in the end of 3.2.2.

The complexities of the steps (by operations over $\FF_{q^m}$):

\begin{itemize}
    \item Step 1. It costs $O(n^3\log q)$ operations for computing $\cpubp^{[i]}$ and  $O(n^{\omega+1})$ for taking the intersection.
    \item Step 2. The principal complexity of this is finding the roots of the system of polynomials. In case of $\lambda=3$, it can be done in polynomial time where the complexity is $\tilde{O}(\tilde{d}^2n \log q)$ for $\tilde{d} = (q^4-q)(q^4-q^2)$.
    
    Once such a root is found, the remaining of step 2 needs a finite number of linear systems solving which costs $O(n^{\omega})$.
\end{itemize}

\subsubsection{First step: Recovering one-dimensional vector spaces}

We now suppose that the three assumptions in section \ref{Sec:Solutions} are true we have the following theorem:

\begin{theorem}\hfill 
\begin{algorithm}[h!] 
\caption{Recovering $1$-dimensional vector spaces}
\label{algo:recover}
\KwData{$\cpubp$, $\lambda \le \dfrac{n}{n-k+1}$}
\KwResult{$\calA_i$ for $i = 0,\dots,n-k-1$}

$\calS_0 \leftarrow \cpubp^{[0]} + \cpubp^{[1]} + \cdots + \cpubp^{[\lambda-1]} $\\

$\calA  \leftarrow \left(\bigcap\limits_{i=0}^{d}\calS_0^{[i]}\right)^{[-d]}$ \\

$\calD_{\lambda-1}  \leftarrow \calA^{[\lambda-2]} \cap \cpubp^{[\lambda-1-d]}$ and $\calB_0 \leftarrow \calA+\calD_{\lambda-1}^{[1-\lambda]}$

$\calD_0 \leftarrow \calB_0 \cap \cpubp^{[-1]}$

\For{$\ell \in 1,\dots,\lambda-2$}{
  $\calB_\ell \leftarrow  \calA +  \sum\limits_{j=0}^{\ell-1}\calD_j^{[\ell-j]}$\;
    $\calD_\ell \leftarrow \calB_{\ell} \cap \cpubp^{[-1]}$
 }
 
$\calH \leftarrow \sum\limits_{j=0}^{\lambda-1} \calC_j^{[2-j-\lambda]}$\\

\For{$i \in 0,\dots,n-k-1$}{
  Return $\calA_i \leftarrow \calH \cap \cpubp^{[-i]} $
 }
\end{algorithm}
  Let $d := n-k-\lambda +1$. Under Assumptions \ref{asp:1}, \ref{asp:2}, \ref{asp:3}, Algorithm (\ref{algo:recover}) returns the $1$-dimensional vector spaces 
\[
\calA_i = \left\langle \bfh_0 + \sum\limits_{j=1}^{\lambda-1}\beta_j^{[-i]} \bfh_j \right\rangle, ~i = 0,\dots,n-k-1
\] 
for any $(\bfh,\vec{\beta}) \in \mathcal{S}$.

\end{theorem}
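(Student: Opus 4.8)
The plan is to push the whole argument into the combinatorics of index sets $I\subset\ZZ\times\ZZ$, using the CodeSet theorem (Theorem~\ref{Theo:CodeSet}) as the only dictionary between codes and index sets. Fix a solution $(\bfh,\vec\beta)\in\calS$ satisfying the generalized‑Moore condition~\ref{asp:1} (one exists by~\ref{asp:1}), and keep the notation $\bfy^{[u,j]}=\bfh_0^{[j]}+\sum_{i=1}^{\lambda-1}\beta_i^{[u]}\bfh_i^{[j]}$, together with $W_j\mydef\langle\bfh_0^{[j]},\dots,\bfh_{\lambda-1}^{[j]}\rangle_{\FF_{q^m}}$, so that $\bfy^{[u,j]}\in W_j$ and $\cpubp^{[i]}=\calC_{\{i\}\times[i,n-k-1+i]}$. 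The target is to show that each code produced by Algorithm~\ref{algo:recover} equals $\calC_I$ for an explicit $I$, ending with $\calH\cap\cpubp^{[-i]}=\calC_{\{(-i,0)\}}=\langle\bfy^{[-i,0]}\rangle=\langle\bfh_0+\sum_{j=1}^{\lambda-1}\beta_j^{[-i]}\bfh_j\rangle$. Since the algorithm reads only $\cpubp$, once this is proved for the fixed solution it extends to every $(\bfh,\vec\beta)\in\calS$: the computation in Proposition~\ref{prop:solutions} shows each line $\langle\bfh_0+\sum_j\beta_j^{[-i]}\bfh_j\rangle$ is multiplied only by a nonzero scalar under the group action, hence is an orbit invariant, and Assumption~\ref{asp:3} (together with the analysis of $\calS$ as a $\textbf{PGL}(\lambda,\FF_q)$‑orbit) removes the remaining ambiguity.

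Two ingredients feed the proof. First, a linear‑independence input: Assumption~\ref{asp:2} forces $\dim\calS_0=\lambda(n-k)$, i.e. the family $\bfy^{[u,j]}$ is $\FF_{q^m}$‑independent over $S_0\mydef\bigsqcup_{u=0}^{\lambda-1}\{u\}\times[u,n-k-1+u]$, and — this is exactly where the precondition $\lambda\le\frac{n}{n-k+1}$, i.e. $\lambda(n-k+1)\le n$, enters — one needs, and under Assumptions~\ref{asp:1}–\ref{asp:2} obtains, that the $\bfh_i^{[j]}$ stay independent as $(i,j)$ runs over $\{0,\dots,\lambda-1\}\times J$ for each window $J$ of $n-k+1$ consecutive integers met by the algorithm; equivalently the $W_j$ over such a window are in direct sum. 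Second, a collapse input: by~\ref{asp:1}, for every $j\in[\lambda-1,n-k-1]$ the $\lambda$ vectors $\bfy^{[0,j]},\dots,\bfy^{[\lambda-1,j]}$ already span all of $W_j$, so $\bfy^{[u,j]}\in\calS_0$ for every $u\in\ZZ$. The collapse is the reason the algorithm works at all: because $\bfy^{[\ZZ\times M]}$ is never independent on a full column $M\subset[\lambda-1,n-k-1]$, several of the code intersections appearing in the algorithm are strictly larger than the intersection of the naive index sets, and one must compute them columnwise — first the range of columns, then, inside a shared column, the subspace of $W_j$ cut out by the relevant $\vec\beta$‑rows, whose dimension is governed by the generalized‑Moore condition~\ref{asp:1}.

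With these in hand the proof is a march down Algorithm~\ref{algo:recover}. From~\eqref{Eq:CpubIndices} one gets $\calS_0=\calC_{S_0}$; combining the collapse with the independence input, the chained steps $\calA=(\bigcap_{i=0}^{d}\calS_0^{[i]})^{[-d]}$, then $\calD_{\lambda-1}=\calA^{[\lambda-2]}\cap\cpubp^{[\lambda-1-d]}$, $\calB_0=\calA+\calD_{\lambda-1}^{[1-\lambda]}$, $\calD_0=\calB_0\cap\cpubp^{[-1]}$, and the loop $\calB_\ell=\calA+\sum_{j=0}^{\ell-1}\calD_j^{[\ell-j]}$, $\calD_\ell=\calB_\ell\cap\cpubp^{[-1]}$, are each identified with $\calC_I$ for an explicit $I$; an induction on $\ell$ shows the $\calD_\ell$ are one‑dimensional and that, after the shifts in $\calH=\sum_{j=0}^{\lambda-1}\calD_j^{[2-j-\lambda]}$, they are $\lambda$ lines in general position inside $W_0$, so $\calH=W_0$. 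Finally, for $i\in[0,n-k-1]$ we have $\cpubp^{[-i]}=\langle\bfy^{[-i,j]}:j\in[-i,n-k-1-i]\rangle$; among these generators only $\bfy^{[-i,0]}$ lies in $W_0$, and because $0$ lies in the window $[-i,n-k-1-i]$ the independence input applies and forbids any nontrivial combination of the remaining generators from landing in $W_0$; hence $\calH\cap\cpubp^{[-i]}=\langle\bfy^{[-i,0]}\rangle=\calA_i$, as claimed.

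The main obstacle is precisely this bookkeeping: propagating, through the cascade $\calA\to\calD_{\lambda-1}\to\calB_0\to\calD_0\to\dots\to\calD_{\lambda-2}\to\calH$, the exact column‑range and $\vec\beta$‑subspace description of each code, and checking at every step either that the ambient index box carries $\FF_{q^m}$‑independent $\bfy$'s (so Theorem~\ref{Theo:CodeSet} delivers the intended intersection) or that the collapse lemma applies (so the enlarged intersection is still the intended one). Pinning down the precise range over which the independence input holds — it must cover every translate used in the cascade — and verifying that the choice $d=n-k-\lambda+1$ together with $\lambda(n-k+1)\le n$ keeps all these translates inside that range is the delicate point; the rest is routine index arithmetic.
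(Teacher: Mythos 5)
Your sketch correctly identifies the paper's strategy and follows it closely: encode every intermediate space as $\calC_I$ for an explicit index set $I$, use Theorem~\ref{Theo:CodeSet} as the dictionary (with $\FF_{q^m}$-independence of $\bfy^{[M]}$ supplied by Assumption~\ref{asp:2} and the freedom to swap row-index sets supplied by Assumption~\ref{asp:1}), and march through the cascade $\calA\to\calD_{\lambda-1}\to\calB_0\to\calD_0\to\cdots\to\calH\to\calA_i$; your reformulation via the ambient spaces $W_j=\langle\bfh_0^{[j]},\dots,\bfh_{\lambda-1}^{[j]}\rangle$ and the ``collapse'' of $\bfy^{[u,j]}$ into $W_j$ is exactly the paper's flexibility lemma (that $I_2$ may be replaced by any $I_2'=\bigsqcup_j A_j\times\{j\}$ with $|A_j|=\lambda$), stated more geometrically.

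Two caveats worth recording. First, the ``routine index arithmetic'' you defer is where the paper's proof actually lives: at each line one must exhibit a concrete ambient set $M'=I_1\sqcup I_2'\sqcup I_4$ containing both arguments of the intersection, check $\bfy^{[M']}$ is independent, and then compute the set-theoretic intersection — the paper devotes a lemma and several figures per step to precisely this, and it is not automatic that every translate appearing in the cascade fits inside an admissible $M'$ (the choice $d=n-k-\lambda+1$ and the bound $\lambda(n-k+1)\le n$ are used there, as you observe). Second, a small misattribution: the fact that $\bfy^{[0,j]},\dots,\bfy^{[\lambda-1,j]}$ already span $W_j$ follows from the $\FF_q$-linear independence of $1,\beta_1,\dots,\beta_{\lambda-1}$ built into the definition of $\calS$ (consecutive Frobenius powers give a $q$-Vandermonde determinant); Assumption~\ref{asp:1} is needed specifically for \emph{nonconsecutive} row indices, i.e.\ for the replacement sets $A_j$ and for the final set $*$ defining $\calH$, where your claim $\calH=W_0$ quietly uses it in the same implicit way the paper does.
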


\begin{proof} 
 For the proof we will thus make intensive use of the CodeSet theorem. First from assumption (2) and theorem \ref{Theo:Dist}, the set 
 \[
 M =  \bigsqcup_{u=0}^{\lambda}{\{ u\}\times [u,n-k+u-1]} = S_0 \sqcup \{ \lambda \} \times [\lambda,n-k+\lambda-1] 
 \]
with cardinality $\lambda(n-k)+ \lambda$, is such that $\bfy^{[M]}$ is formed of linearly independent vectors and $\calC_M = \sum\limits_{i=0}^{\lambda}\cpubp^{[i]}$. This point is very important since this is the crucial point of the proof of the theorem. 

\noindent\textbf{Line 1.} 
 From theorem \ref{Theo:Dist} we have
        \[
 \calS_0 =   \mathcal{C}_{S_0},\mbox{ where }S_0 =   \bigsqcup_{u=0}^{\lambda-1}{\{ u\}\times [u,n-k+u-1]}
 \] 
   
 We can write $S_0$ under the form 
 \[
 S_0 =  \underbracket[.5pt]{\left( \bigsqcup_{u=0}^{\lambda-2}{\{ u\}\times [u,\lambda-2]} \right)}_{I_1} \sqcup 
  \underbracket[.5pt]{[0,\lambda-1]\times [\lambda-1,n-k-1]}_{I_2} \sqcup 
 \underbracket[.5pt]{\left( \bigsqcup_{u=0}^{\lambda-1}{\{ u\}\times [n-k,n-k+u-1]} \right)}_{I_3} 
\]
Let $I_4 = I_3 \sqcup \{\lambda\} \times [n-k,n-k+\lambda-1]$. With these notations, we have

$S_0 = I_1 \sqcup I_2 \sqcup I_3 \subset I_1 \sqcup I_2 \sqcup I_4 = M$. 
Since their cardinalities satisfy 

$|I_1| = \frac{\lambda(\lambda-1)}{2}$, $|I_2| = \lambda(n-k-\lambda+1)$, $|I_3| = \frac{\lambda(\lambda-1)}{2} $ and $|I_4| = \lambda $,

from theorem \ref{Theo:Dist} the dimension of $\calS_0$ and $\calC_M$ is exactly $\lambda (n-k)$ and $\lambda (n-k) + \lambda$ respectively, and additionally under the CodeSet theorem,
\begin{align*}
    \calS_0 &= \calC_{I_1} \oplus \calC_{I_2} \oplus \calC_{I_3}\\
    \calC_M &= \calC_{I_1} \oplus \calC_{I_2} \oplus \calC_{I_4}
\end{align*}

The set $I_2$ corresponds to the set denoted by $I$ in the proof of theorem \ref{Theo:Dist}. We have 
\[
    \calC_{I_2} 
    = \left\langle \bfh_i^{[j]}, \begin{array}{l}
        0 \le i \le \lambda-1  \\
 		\lambda-1 \le j \le n-k-1 
 	  \end{array}
 \right\rangle
\]

This property gives us the flexibility for the modification of the set $M$ to obtain several sets of indexes $M'$ such that $\bfy^{[M']}$ is formed of linearly independent vectors. It can be done by the replacement of the set $[0,\lambda-1]$ by the set $A_j$  of $\lambda$ elements corresponding to any $j$. We can see it precisely as the following lemma: 

\begin{lemma}
    For every set $I_2' = \bigsqcup\limits_{j=\lambda-1}^{n-k-1}A_j \times \{j\}$ where $|A_j| = \lambda$, then $M' = I_1 \sqcup I_2' \sqcup I_4$ satisfies
\begin{itemize}
    \item $\calC_M = \calC_{M'}$.
    \item $\bfy^{[M']}$ is formed of linearly independent vectors.
\end{itemize}
\end{lemma}

\begin{proof}~
$\calC_{I_2'} = \left\langle \bfh_i^{[j]}, \begin{array}{l}
        0 \le i \le \lambda-1  \\
 		\lambda-1 \le j \le n-k-1 
 	  \end{array}
 \right\rangle = \calC_{I_2}$ (Assumption~\ref{asp:1}). Moreover, $|I_2'| = |I_2| = \lambda(n-k-\lambda+1) $. Hence $\calC_{M'} = \calC_{I_1} \oplus \calC_{I_2'} \oplus \calC_{I_4} = \calC_M$ and $\bfy^{[M']}$ is formed of linearly independent vectors.
\end{proof}

Through this section, to apply the CodeSet theorem, in the beginning of each step, we will define its set of indexes $M'$ such that $\bfy^{[M']}$ are linearly independent vectors and it contains the set of indexes of subspace that we want to compute the intersection. To be convenient, we will use some images where the red dot ${\color{red} \bullet}$ are indexes of some transformation of $I_1$, the blue square ${\color{blue} \sq}$ are indexes of some transformation of $I_4$, the green ${\color{green} \times}$ are indexes of some transformation of $I_2$ and the black diamond $\smallblackdiamond$ are indexes of $\cpubp$. On the other hand, the integer points which are inside the blue figures are indexes of linearly independent vectors. The left triangular covers all the points of the set of indexes $I_1$, the right triangular covers all the points of the set of indexes $I_4$ and the rectangular covers all the points of the set of indexes which is flexible modification of $I_2$.

To be convenient, for $I \subset \ZZ \times \ZZ$ and $a \in \ZZ$, we denote $I + a = \{(u+a,j+a), ~(u,j) \in I\}$. In the figures, we can consider $I+a$ as the translation of $I$ by the vector $(a,a)$. For example, in the figure \ref{fig:fig1}, the set of red points shows $I_1$ in the first image and $I_1+d+1$ in the second ones.

\noindent\textbf{Line 2.} We show that
$
\calA = \calC_{I_1 \sqcup I_3+(\lambda-(n-k)-1)}
$.
\begin{lemma}
\label{Lemma:Inter}
Let $\calS_i\stackrel{def}{=} \cpubp^{[i]} + \cpubp^{[i+1]} + \cdots + \cpubp^{[i+\lambda-1]}$. For any set $*$ of $\lambda$ distinct integers modulo $m$ we have 
   \[
            \forall 0  \le d \le n-k-\lambda+1, ~ \bigcap\limits_{i=0}^{d}\calS_i  =  \calC_{(I_1+d) \sqcup * \times [\lambda-1+d,n-k-1] \sqcup I_3} 
    \]
\end{lemma}

\begin{proof}
    We prove the theorem by induction. This lemma is true for $d = 0$. We suppose that it is true until $0 \le d   \le n-k-\lambda$, then we need to prove that it must be true for $d+1$.
    
    Indeed,

\begin{equation*}
    \begin{aligned}
     \bigcap_{i=0}^{d+1} \calS_i  
    &=&  &  \calS_0 \cap \left(  \bigcap \limits_{i=0}^{d} \calS_i  \right)^{[1]}  \\
    &=&  &\calC_{S_0}  \cap  
    	        ~  \calC_{(I_1+d+1) \cup * \times [\lambda+d,n-k] \cup (I_3+1)}
    \end{aligned}
\end{equation*}

\begin{figure}[h]
    \centering
    \includegraphics[scale=0.3]{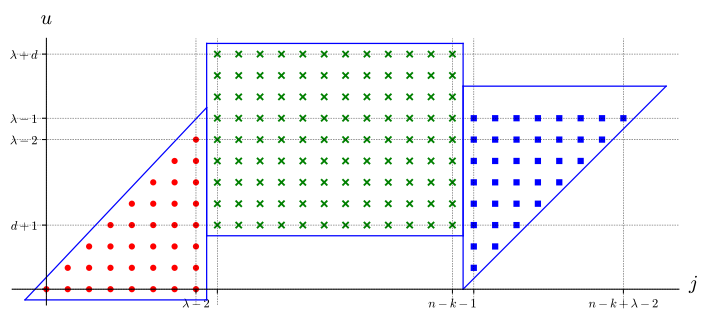}
    \includegraphics[scale=0.3]{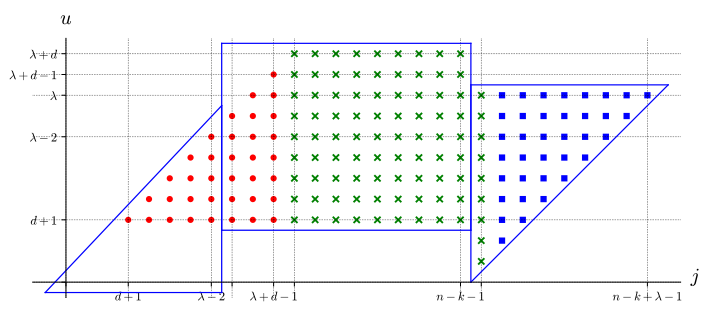}
    \caption{Points of $\calS_0$ (above) and $(I_1+d+1) \cup * \times [\lambda+d,n-k] \cup (I_3+1)$}
    \label{fig:fig1}
\end{figure}

Let $M_1 = I_1 \sqcup I_4 \sqcup [d+1,\lambda+d] \times [\lambda-1,n-k-1]$

Concerning $S_0 \subset M_1$ and $J \stackrel{def}{=}(I_1+d+1) \cup * \times [\lambda+d,n-k] \cup (I_3+1) \subset M_1$. We can apply the CodeSet theorem  
\[
\calC_{S_0} \cap \calC_{J} = \calC_{S_0 \cap J}
\]
And by a slightly fastidious computation on the sets intersections, we see that
\[
S_0 \cap J = (I_1+d+1) \cup * \times [\lambda+d,n-k-1] \cup I_3
\]
It is not very difficult to check that the sets do not intersect which gives the result.

\end{proof}

In the rest of the proof we will suppose that  $d = n-k-\lambda+1$. If we instantiate the lemma with $d$ and elevate to the power $[-d]$ we obtain the following corollary:
\begin{corollary}
$\calA = \calC_{I_1 \sqcup I_3-d}$
\end{corollary}

\begin{proof}
We have $\bigcap\limits_{i=0}^{d}\calS_i  =  \calC_{(I_1+d) \sqcup  I_3}$, with $I_1$ and $I_3$ subsets of $M$. Thus, from CodeSet theorem we have 
\[
\calA^{[d]} =  \calC_{I_1+d} \oplus \calC_{I_3}
\] 
Implying that $\calA =  \calC_{I_1} \oplus \calC_{I_3-d}$. 


\end{proof}

\noindent\textbf{Line 3.}
Since $ \cpubp = \calC_{\{0\} \times [0,n-k-1]}$, we have
\[
    \calA^{[d - 1]} \cap \cpubp = \calC_{(I_1+(d-1)) \sqcup (I_3 -1)} \cap \calC_{\{0\} \times [0,n-k-1]}
\]
with
\begin{align*}
    I_1+(d- 1) &= \{(u,j): d - 1 \le u \le j \le n-k-2\}\\
    I_3 - 1 &= \{(u,j): 0 \le u \le \lambda-2, ~n-k-1 \le j \le n-k+u-1 \}
\end{align*}

\begin{figure}[h]
    \centering
    \includegraphics[scale=0.3]{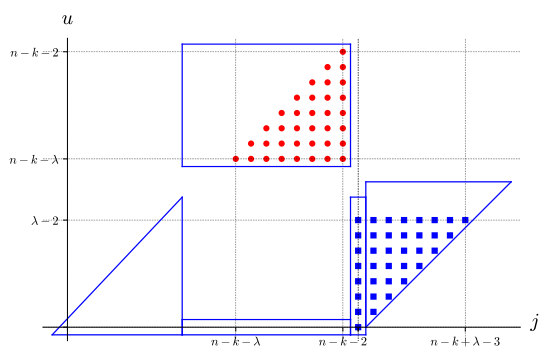}
    \includegraphics[scale=0.3]{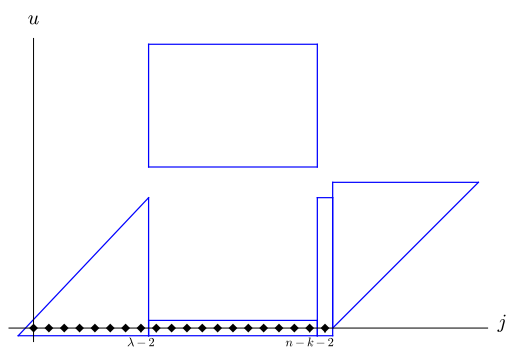}
    \caption{Points of $(I_1+(d-1)) \sqcup (I_3 -1)$ and $\{0\} \times [0,n-k-1]$}
    \label{fig:fig2}
\end{figure}

Let 
\[
    I_2' = \left((\{0\} \sqcup [n-k-\lambda,n-k-2]) \times [\lambda-1,n-k-2]\right) \sqcup [0,\lambda-1] \times \{n-k-1\} 
\]
and $M_2 = I_1 \sqcup I_4 \sqcup I_2'$

We can prove that $I_1+(d- 1),  I_3 - 1$ and $\{0\} \times [0,n-k-1]$ are all in $M_2$. Now since $\lambda \le (n-k)/2$, we have $d \ge \lambda-1$, it implies that 
\[
    \left( (I_1+(d-1)  \sqcup (I_3 - 1 ) \right) \cap \{0\} \times [0,n-k-1] = \{ (0,n-k-1) \}
\]
Since the intersecting sets are all subsets of the set $M_2$, we can apply the 
CodeSet theorem, and we obtain
\[
    \calA^{[d-1]} \cap \cpubp = \calC_{\{(0,n-k-1)\}}  
\]
which by elevating to the power $[\lambda-d-1]$  gives 
\[
\calD_{\lambda-1} = \calA^{[\lambda-2]} \cap \cpubp^{[\lambda-d-1]} = \calC_{\{(\lambda-d-1,2\lambda-3)\}}
\]
Note that elevating to the power $[m]$ the scalars corresponds to the identity operator, we also have 
 $\calD_{\lambda-1} = \calC_{\{(m+\lambda-d-1,2\lambda-3)\}}$. This will be of use in the proof of the algorithm. 
Now since  $\calB_0 = \calA+\calD_{\lambda-1}^{[1-\lambda]}$, we deduce that 
\[
    \calB_0 = \calC_{I_1 \sqcup I_3 - d} + \calC_{\{(-d,\lambda-2)\}}
\]

\textbf{Line 4.} Compute $\calD_0 = \calB_0 \cap \cpubp^{[-1]}$

We compute
\begin{align*}
    \calD_0^{[1]} &= \calB_0^{[1]} \cap \cpubp\\
    &= \left(\calC_{(I_1+1) \sqcup I_3 - (d-1)} + \calC_{\{(-d+1,\lambda-1)\}}\right) \cap \calC_{\{0\} \times [0,(n-k)-1]}\\
    &= \left(\calC_{(I_1+1) \backslash [1,\lambda-1] \times \{\lambda-1\}} \oplus \calC_{* \times \{\lambda-1\}} \oplus \calC_{I_3 - (d-1)}\right) \cap \calC_{\{0\} \times [0,(n-k)-1]}
\end{align*}
where * is instantiated for the set $[0,\lambda-2] \sqcup \{-d+1\}$ which contains $\lambda$ different integers.

\begin{figure}[h]
    \centering \includegraphics[scale=0.3]{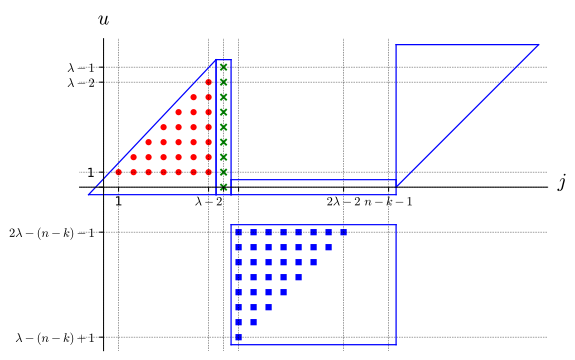}
    \includegraphics[scale=0.3]{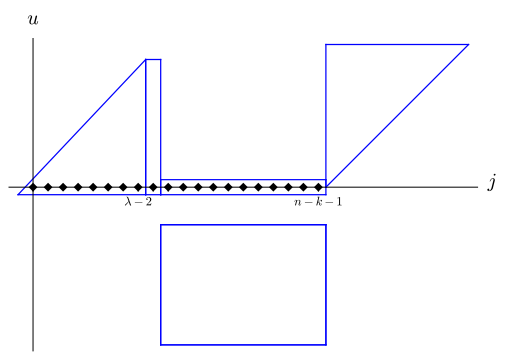}
    \caption{Points of $\left((I_1+1) \backslash [1,\lambda-1] \times \{\lambda-1\}\right) \sqcup * \times \{\lambda-1\} \sqcup I_3 - (d-1)$ and $\{0\} \times [0,(n-k)-1]$}
    \label{fig:fig3}
\end{figure}

Let $I_2' = ([0,\lambda-1] \times \{\lambda-1\}) \sqcup ([2-d,\lambda-d] \times [\lambda,n-k-1])$ and $M_3 = I_1 \sqcup I_4 \sqcup I_2'$

Now $(I_1+1) \sqcup I_3 - (d-1) \sqcup (-d+1,\lambda-1) \subset M_3$. Moreover,

\[
    \left((I_1+1) \sqcup I_3 - (d-1) \right) \cap \{0\} \times [0,(n-k)-1] = \emptyset
\]
and $* \times \{\lambda-1\} \cap  \{0\} \times [0,n-k-1] = \{(0,\lambda-1)\} $

Hence, by the CodeSet theorem,
\begin{align*}
    \calB_0^{[1]} \cap \cpubp &= \calC_{\{(0,\lambda-1)\}}\\
    \calB_0 \cap \cpubp^{[-1]} &= \calC_{\{(m-1,\lambda-2)\}}
\end{align*}

\textbf{Line 5,6,7} For $1 \le i \le \lambda-2$, we compute
\begin{align*}
    \calB_i &= \calA +  \sum\limits_{j=0}^{i-1}\calC_j^{[i-j]}\\
    \calD_i &= \calB_{i} \cap \cpubp^{[-1]} = \calC_{\{(m-1,\lambda+i-2)\}}
\end{align*}

We prove by induction, suppose that $\calD_i = \calC_{\{(m-1,\lambda+i-2)\}}$ for all $1 \le i \le \ell \le \lambda-2$. we prove this for $i = \ell+1$
\begin{align*}
    \calB_{\ell+1} &= \calA +  \sum\limits_{j=0}^\ell\calC_j^{[\ell+1-j]}\\
    \calD_{\ell+1} &= \calB_{\ell+1} \cap \cpubp^{[-1]} = \calC_{\{(m-1,\lambda+\ell-1)\}}
\end{align*}

Indeed,

\begin{align*}
    \calB_{\ell+1} &= \calA + \sum\limits_{j=0}^{\ell}\left(\calC_{\{(m-1,\lambda+j-2)\}}\right)^{[\ell+1-j]}\\
    &= \calA + \sum\limits_{j=0}^\ell \calC_{\{(\ell-j,\lambda+\ell-1)\}}\\
    &= \calC_{I_1 \sqcup I_3-d} + \calC_{[0,\ell] \times \{\lambda+\ell-1\}}\\
    &= \calC_{I_1 \sqcup (I_3-d \backslash [\ell-d+1,\lambda-d-1]\times\{\lambda+\ell-1\})} + \calC_{* \times \{\lambda+\ell-1\}}
\end{align*}
where * is instantiated for the set $[\ell-d+1,\lambda-d-1] \sqcup [0,d]$ of $\lambda$ distinct integers.

\begin{figure}[h]
    \centering     \includegraphics[scale=0.3]{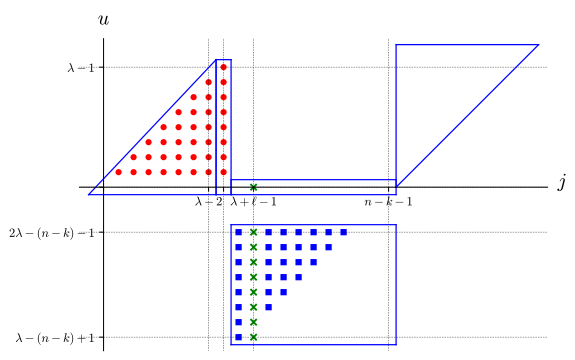}
    \includegraphics[scale=0.3]{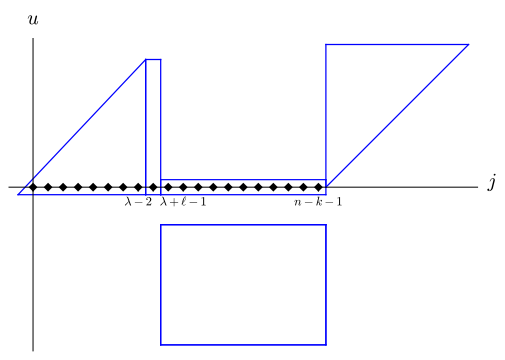}
    \caption{Points of $I_1 \sqcup (I_3-d \backslash [\ell-d+1,\lambda-d-1]\times\{\lambda+\ell-1\}) \sqcup * \times \{\lambda+\ell-1\}$ and $\{0\} \times [0,(n-k)-1]$}
    \label{fig:fig4}
\end{figure}

We compute $\calD_{\ell+1}^{[1]} = \calB_{\ell+1}^{[1]} \cap \cpubp$. Let 
\[
I_2' = ([0,\lambda-1] \times \{\lambda-1\}) \sqcup (\{0\} \sqcup [2-d,\lambda-d] ) \times [\lambda,n-k-1]
\]
 and  $M_4 = I_1 \sqcup I_4 \sqcup I_2'$. Then, $I_1+1, I_3-d+1$ and $* \times \{\lambda+\ell+1\} \subset M_4$.

Moreover,

\begin{align*}
    I_3-d+1 \cap \{0\} \times [0,n-k-1] &= \varnothing\\
    I_1+1 \cap \{0\} \times [0,n-k-1] &= \varnothing
\end{align*}

Hence, by CodeSet theorem,
\begin{align*}
    \calB_{\ell+1}^{[1]} \cap \cpubp &= \calC_{* \times [\lambda+\ell]\cap \{0\} \times [0,n-k-1]}\\
    &= \calC_{\{(0,\lambda+\ell)\}}
\end{align*}

Therefore, $\calD_{\ell+1} = \calB_{\ell+1} \cap \cpubp^{[-1]} = \calC_{\{(m-1,\lambda+\ell-1)\}} $

\textbf{Line 8.}  Compute $\calH =  \sum\limits_{j=0}^{\lambda-1} \calC_j^{[2-j-\lambda]}$.

We consider the sum of subspace:
\begin{align*}
    & \sum\limits_{j=0}^{\lambda-1} \calC_j^{[2-j-\lambda]} = \sum\limits_{j=0}^{\lambda-2} \calC_{\{(m+1-\lambda-j,0)\}} +  \calC_{\{(m+1-(n-k),0)\}} \\
    &= \calC_{([m+3-2\lambda,m+1-\lambda] \sqcup \{m+1-(n-k)\}) \times \{0\} } = \calC_{* \times \{0\}} =: \calH
\end{align*}
where * is instantiated for the set $[m+2-2\lambda,m+1-\lambda] \sqcup \{m+1-(n-k)\}$ of $\lambda$ distinct integers.

\textbf{Line 9-10} Next, for any $i \in \{0,\dots n-k-1\}$, one can compute
\[
    \cpubp^{[-i]} \cap \calH = \langle \bfh_0 + \sum\limits_{i=1}^{\lambda-1}\beta_i^{[-i]} \bfh_i\rangle 
\]

\begin{itemize}
    \item For $\lambda-1 \le i \le n-k-1$, $* \times \{i\} \subset M$
    \[
    \{(0,i)\} = (\{0\} \times [0,n-k-1]) \cap (* \times \{i\})
    \]
    Hence, by the CodeSet theorem, $\calC_{(0,i)} =  \cpubp \cap \calH^{[i]}$
    \item For  $0 \le i \le \lambda-2$, $\cpubp^{[\lambda-1]}  = \calC_{\{\lambda-1\} \times [\lambda-1,n-k+\lambda-2] }$ and $\calH^{[\lambda+i-1]} = \calC_{*\times\{\lambda+i-1\}}$. Moreover, $*\times\{\lambda+i-1\} \subset *\times [\lambda-1,2\lambda-3] \subset M$ and
    \[
    \{(\lambda-1,\lambda+i-1)\} = (\{\lambda-1\} \times [\lambda-1,n-k+\lambda-2]) \cap  (*\times\{\lambda+i-1\}))
    \]
    Hence, by the CodeSet theorem, 
    \begin{align*}
        \cpubp^{[\lambda-1]} \cap \calH^{[\lambda+i-1]} &= \calC_{\{(\lambda-1,\lambda+i-1)\}}  \\
        \cpubp \cap \calH^{[i]} &= \calC_{\{(0,i)\}} 
    \end{align*}

    Therefore, for any $i \in \{0,\dots n-k-1\}$, one can compute
\[
    \cpubp^{[-i]} \cap \calH = \calC_{\{(-i,0)\}} =  \langle \bfh_0 + \sum\limits_{i=1}^{\lambda-1}\beta_i^{[-i]} \bfh_i\rangle 
\]
\end{itemize}



\end{proof}

Note that this specialization of one element of $\mathcal{S}$ should be true for any element in $\mathcal{S}$. Indeed, for 2 elements $(\bfh',\beta')$ and $(\bfh,\beta)$, if there exists $\bfA \in GL(\lambda,\FF_q)$ such that $(\bfh',\beta') = \bfA \cdot (\bfh,\beta)$, then $\langle \bfh_0 + \sum\limits_{j=1}^{\lambda-1}\beta_j^{[-i]} \bfh_j\rangle = \langle \bfh_0 + \sum\limits_{j=1}^{\lambda-1}{\beta'}_j^{[-i]} {\bfh'}_j\rangle$ since ${\bfh'}_0 + \sum\limits_{j=1}^{\lambda-1}{\beta'}_j^{[-\ell]}{\bfh'}_j = \left(a_{0,0} + \sum\limits_{i=1}^{\lambda-1}a_{i,0}\beta_i\right)^{1-[-\ell]}\left(\bfh_0 + \sum\limits_{i=1}^{\lambda-1}\beta_i^{[-\ell]}\bfh_i \right)$

\subsubsection{Second step: Recovering the vector space}\hfill

\noindent From step 1, we recovered the $1$-dimensional vector-spaces 
\[
\forall i=0,\ldots,n-k-1, ~ \calA_i = \left\langle \bfh_0 + \sum\limits_{j=1}^{\lambda-1}\beta_j^{[-i]} \bfh_j \right\rangle
\]
The vector spaces  $\calA_i$ do not depend on $(\bfh, \mathbf{\beta})\in \calS$.
We introduce the following lemma.

\begin{lemma}
    \label{Lemma:Sum}
    For any $\bfu_0 \in \calA_0$, and for any set $\calI = \{ i_1,\ldots,i_{\lambda} \} \subset \{1,\dots,n-k-1\}$  of $\lambda$
    distinct elements, there exists a unique $\lambda$-tuple  
    $\bfu_\calI \stackrel{def}{=}(\bfu_{i_1},\bfu_{i_2},\ldots,\bfu_{i_\lambda}) \in \bigtimes\limits_{j=1}^{\lambda}{\calA_{i_j}}$ such that 
    \[ 
    \sum\limits_{i_j \in \calI}\bfu_{i_j} = \bfu_0
    \]
\end{lemma}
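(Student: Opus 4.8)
The plan is to reduce everything to two linear‑algebra facts about the single vectors that span the lines $\calA_i$. Fix once and for all the solution $(\bfh,\vec\beta)\in\calS$ furnished by Assumption \ref{asp:1}; this is harmless because, as noted just above, the spaces $\calA_i$ do not depend on the chosen element of $\calS$. Set $\beta_0:=1$ and $\bfv_i:=\sum_{\ell=0}^{\lambda-1}\beta_\ell^{[-i]}\bfh_\ell=\bfh_0+\sum_{\ell=1}^{\lambda-1}\beta_\ell^{[-i]}\bfh_\ell$, so that $\calA_i=\langle\bfv_i\rangle$ for $i=0,\dots,n-k-1$ and all of $\bfv_0,\bfv_{i_1},\dots,\bfv_{i_\lambda}$ lie in $W:=\langle\bfh_0,\dots,\bfh_{\lambda-1}\rangle_{\FF_{q^m}}$, a space of dimension at most $\lambda$. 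The two facts I will establish are: (i) $\bfh_0,\dots,\bfh_{\lambda-1}$ are $\FF_{q^m}$-linearly independent, so $\dim W=\lambda$; and (ii) the $\lambda\times\lambda$ matrix $B=(\beta_\ell^{[-i_j]})_{1\le j\le\lambda,\,0\le\ell\le\lambda-1}$ (whose first column is all ones) is invertible. Granting these, the coordinate matrix of $\bfv_{i_1},\dots,\bfv_{i_\lambda}$ in the basis $\bfh_0,\dots,\bfh_{\lambda-1}$ of $W$ is $B$, so these $\lambda$ vectors form a basis of $W$; since $\bfv_0\in W$, there is a unique tuple $(\gamma_1,\dots,\gamma_\lambda)\in\FF_{q^m}^\lambda$ with $\bfv_0=\sum_j\gamma_j\bfv_{i_j}$. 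Then, given $\bfu_0\in\calA_0$, write $\bfu_0=c\bfv_0$ (the scalar $c$ being unique since $\bfv_0\neq 0$), put $\bfu_{i_j}:=c\gamma_j\bfv_{i_j}\in\calA_{i_j}$, and observe $\sum_j\bfu_{i_j}=c\bfv_0=\bfu_0$; uniqueness of the $\bfu_{i_j}$ is immediate from the linear independence of the $\bfv_{i_j}$.

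For (i) I will read the independence off the index set $M=\bigsqcup_{u=0}^{\lambda}\{u\}\times[u,n-k+u-1]$ used in the proof of the previous theorem. Under Assumption \ref{asp:2} the family $\bfy^{[M]}$ is linearly independent, and $[0,\lambda-1]\times\{\lambda-1\}\subset M$ (here one uses $\lambda\le n-k$, which holds because $\{1,\dots,n-k-1\}$ must contain the $\lambda$ distinct integers of $\calI$), so $\calC_{[0,\lambda-1]\times\{\lambda-1\}}$ has dimension $\lambda$ by the CodeSet theorem. On the other hand, the Vandermonde-type argument used in the proof of Theorem \ref{Theo:Dist} gives $\calC_{[0,\lambda-1]\times\{\lambda-1\}}=\langle\bfh_0^{[\lambda-1]},\dots,\bfh_{\lambda-1}^{[\lambda-1]}\rangle$. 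A spanning family of size $\lambda$ of a $\lambda$-dimensional space is a basis, so $\bfh_0^{[\lambda-1]},\dots,\bfh_{\lambda-1}^{[\lambda-1]}$ are linearly independent, and applying the inverse Frobenius $[1-\lambda]$ yields (i).

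For (ii) the only point to watch is that Assumption \ref{asp:1} is phrased with exponents $[i_j]$ whereas $B$ carries exponents $[-i_j]$. This is resolved by applying the field automorphism $x\mapsto x^{[n-k]}$ entrywise to $B$: since this operation commutes with the determinant and sends the entry $\beta_\ell^{[-i_j]}$ to $\beta_\ell^{[\,n-k-i_j\,]}$ (and fixes the all-ones first column), $\det B$ is nonzero if and only if the determinant of Assumption \ref{asp:1} for the index set $\{n-k-i_1,\dots,n-k-i_\lambda\}$ is nonzero. Since $i_1,\dots,i_\lambda$ are $\lambda$ distinct elements of $\{1,\dots,n-k-1\}$, so are $n-k-i_1,\dots,n-k-i_\lambda$, and that determinant is therefore nonzero by Assumption \ref{asp:1}; hence $\det B\neq0$.

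I expect fact (i) to be the main obstacle: the linear independence of $\bfh_0,\dots,\bfh_{\lambda-1}$ is not visible from the definition $\bfh_\ell=\bfa\bfP_\ell$, since the matrices $\bfP_\ell$ need not be invertible, and one has to notice that it is already forced by Assumption \ref{asp:2} through a single horizontal slice of $M$ combined with the CodeSet theorem. Once $\dim W=\lambda$ is in hand, the existence of the decomposition (as opposed to merely its uniqueness) comes for free, because $\lambda$ independent vectors of $W$ necessarily span $W\ni\bfv_0$; everything else is bookkeeping with one invertible $\lambda\times\lambda$ matrix.
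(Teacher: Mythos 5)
Your proof is correct and follows the same underlying strategy as the paper: show that the $\lambda$ lines $\calA_{i_1},\dots,\calA_{i_\lambda}$ are in direct sum and together span $W=\langle\bfh_0,\dots,\bfh_{\lambda-1}\rangle$, which contains $\calA_0$, so that any $\bfu_0\in\calA_0$ has a unique decomposition across the $\calA_{i_j}$. The paper proves this in one line by invoking Assumption~\ref{asp:1} to get $\calA_{i_1}\oplus\cdots\oplus\calA_{i_\lambda}=\langle\bfh_0,\dots,\bfh_{\lambda-1}\rangle$. You supply two ingredients that the one-line proof silently uses but never justifies, and both of your justifications are sound. First, for the right-hand side to have dimension $\lambda$ (and hence for the sum on the left to be direct) one needs $\bfh_0,\dots,\bfh_{\lambda-1}$ to be $\FF_{q^m}$-linearly independent; this is not a consequence of the ``$\bfh_j$ has rank $n$'' condition in the definition of $\calS$ (scalar multiples can all have rank $n$) nor of Assumption~\ref{asp:1}. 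Your derivation of it from Assumption~\ref{asp:2} via the slice $[0,\lambda-1]\times\{\lambda-1\}\subset M$ of the CodeSet theorem, combined with the Vandermonde identification $\calC_{[0,\lambda-1]\times\{\lambda-1\}}=\langle\bfh_0^{[\lambda-1]},\dots,\bfh_{\lambda-1}^{[\lambda-1]}\rangle$, is exactly what is needed. Second, Assumption~\ref{asp:1} is phrased with exponents $[i_j]$, while the coordinate matrix $B$ of the $\bfv_{i_j}$ carries exponents $[-i_j]$; your observation that conjugating by the $[n{-}k]$-th Frobenius power turns $[-i_j]$ into $[\,n{-}k{-}i_j\,]$, with $\{n{-}k{-}i_j\}$ again a set of $\lambda$ distinct elements of $\{1,\dots,n{-}k{-}1\}$, cleanly reconciles the sign mismatch. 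So the proposal is not a different route, but a complete version of the paper's argument: it makes explicit the dependence on Assumption~\ref{asp:2} (not only Assumption~\ref{asp:1}) and resolves the exponent-sign issue, both of which the paper leaves to the reader.
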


\begin{proof}
    We observe that, from assumption (1) we have 
    \[
            \calA_{i_1} \oplus \cdots \oplus \calA_{i_\lambda} = \langle \bfh_0,\dots,\bfh_{\lambda-1} \rangle.
    \]
    Since $\calA_0 \subset \langle \bfh_0,\dots,\bfh_{\lambda-1} \rangle$, this completes the proof.
\end{proof}

We denote $k_{i_\ell} \in \FF_{q^m}$ such that $\forall i_\ell \in \calI, \bfu_{i_\ell} = k_{i_\ell}\left(\bfh_0 + \sum\limits_{j=1}^{\lambda-1} \beta_j^{[-i_\ell]}\bfh_j\right)$. A vector $\bfu_0 \in \calA_0$ can be written under the form 
\[
\bfu_0 = \alpha_{\bfh,\mathbf{\beta}}(\bfh_0 + \sum_{j=1}^\lambda{\beta_j \bfh_j})
\]
From the structure of the solution space $\calS$, there exists an $(\bfh,\mathbf{\beta}) \in \calS$ such that $\alpha_{\bfh,\mathbf{\beta}}=1$. It means that we can fix $
{\color{blue}\bfu_0} := \bfh_0 + \sum_{j=1}^\lambda{\beta_j \bfh_j}$ as a known vector. For this element and for any $\calI = \{i_1,\ldots,i_{\lambda} \}$ from Lemma \ref{Lemma:Sum}  we have

\begin{align*}
    \sum \limits_{i_\ell \in \calI } k_{i_\ell}^\calI \left(\bfh_0 + \sum\limits_{j=1}^{\lambda-1} \beta_j^{[-i_\ell]}\bfh_j\right) 
    &= \left(\sum \limits_{i_\ell \in \calI }k_{i_\ell}^\calI  \right)\bfh_0 + \sum\limits_{j=1}^{\lambda-1}\left(\sum \limits_{i_\ell \in \calI}k_{i_\ell}^\calI  \beta_j^{[-i_\ell]} \right)\bfh_j\\
    &= \bfh_0 + \sum\limits_{j=1}^{\lambda-1}\beta_j\bfh_j
\end{align*}
Since the $\bfh_j$ are linearly independent we obtain the following system
\[
    (k_{i_1}^\calI ,k_{i_2}^\calI ,\dots,k_{i_\lambda}^\calI )\begin{bmatrix}
  ~ & 1 & \beta_1^{[-i_1]} &\beta_2^{[-i_1]} &\cdots &\beta_{\lambda-1}^{[-i_1]}\\ 
  ~ & 1 & \beta_1^{[-i_2]} & \beta_2^{[-i_2]} &\cdots &\beta_{\lambda-1}^{[-i_2]}\\ 
  ~ & \vdots & \vdots &\vdots&\ddots &\vdots\\
  ~ & 1 & \beta_1^{[-i_{\lambda}]}&\beta_2^{[-i_{\lambda}]} &\cdots &\beta_{\lambda-1}^{[-i_{\lambda}]}
\end{bmatrix} = (1,\beta_1,\beta_2,\dots,\beta_{\lambda-1})
\]
in the unknowns $k_i^\calI $ and $\beta_i$. From assumption ~\ref{asp:1}, knowing the $\beta_i$'s, the solution is unique. 
To solve the system,  let us consider the associate matrix 
\[
    Mat^\calI(\vec{X}) := \begin{bmatrix}
  ~ & 1 & X_1^{[i_1]} &X_2^{[i_1]} &\cdots &X_{\lambda-1}^{[i_1]}\\ 
  ~ & 1 & X_1^{[i_2]} & X_2^{[i_2]} &\cdots &X_{\lambda-1}^{[i_2]}\\ 
  ~ & \vdots & \vdots &\vdots&\ddots &\vdots\\
  ~ & 1 & X_1^{[i_{\lambda}]}&X_2^{[i_{\lambda}]} &\cdots &X_{\lambda-1}^{[i_{\lambda}]}
\end{bmatrix}
\]
where  $\vec{X} = (X_1,X_2,\dots,X_{\lambda-1})$ is formed with the unknowns. We define the multivariate polynomial 
\[
f^\calI(\vec{X})\stackrel{def}{=}  \det(Mat^\calI(\vec{X}))
\]
Since  $f^\calI \in \FF_{q}[\vec{X}]$ we have
\begin{lemma}
\label{Lemma:PolyMult}
$f^{\calI}$ has degree $\sum_{j \in \calI}{[j]}$ and 
for all $u \in \mathbb{Z}$, 
    $f^{\calI + u }(\vec{X}) = f^{\calI}(\vec{X})^{[u]}$.
\end{lemma}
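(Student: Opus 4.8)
The plan is to obtain both assertions directly from the Leibniz expansion of $f^{\calI}(\vec X)=\det Mat^{\calI}(\vec X)$, where $Mat^{\calI}(\vec X)$ is the $\lambda\times\lambda$ matrix whose $\ell$-th row is $(1,X_1^{[i_\ell]},\dots,X_{\lambda-1}^{[i_\ell]})$. Since every entry lies in $\FF_q[\vec X]$, the membership $f^{\calI}\in\FF_q[\vec X]$ is immediate, and the two parts of the statement are then essentially a degree count and a functoriality observation.

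For the degree I would write $f^{\calI}=\sum_{\sigma}\mathrm{sgn}(\sigma)\prod_{\ell=1}^{\lambda}M_{\ell,\sigma(\ell)}$, the sum ranging over the bijections $\sigma$ from the $\lambda$ rows onto the $\lambda$ columns, column $0$ being the all-ones one. In each summand exactly one row $\ell_0=\sigma^{-1}(0)$ meets the constant column, and the remaining entries multiply to the single monomial $\prod_{\ell\neq\ell_0}X_{\sigma(\ell)}^{[i_\ell]}$, of total degree $\sum_{j\in\calI}[j]-[i_{\ell_0}]$. I would then check that distinct permutations yield pairwise distinct monomials: the exponent of $X_c$ in the monomial attached to $\sigma$ equals $[i_{\sigma^{-1}(c)}]$, so agreement of two such monomials forces agreement of the permutations, because the numbers $[i_\ell]=q^{i_\ell}$ are pairwise distinct (the $i_\ell$ being distinct modulo $m$). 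Hence $f^{\calI}$ has exactly $\lambda!$ nonzero monomials, no cancellation occurs, and its degree is read off the extremal ones, namely $\sum_{j\in\calI}[j]$ minus the smallest Frobenius weight occurring. (Concretely, subtracting the first row from the others and expanding along the first column reduces $f^{\calI}$ to the $(\lambda-1)\times(\lambda-1)$ determinant $\det\bigl(X_c^{[i_\ell]}-X_c^{[i_1]}\bigr)_{\ell\geq 2,\ c\geq 1}$, which exhibits the degree at once.)

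For the covariance under shifts, the key point is that $Mat^{\calI+u}(\vec X)$ is the entrywise image of $Mat^{\calI}(\vec X)$ under the ring endomorphism $[u]$ of $\FF_{q^m}[\vec X]$ that raises coefficients to the $q^u$-th power and substitutes $X_c\mapsto X_c^{[u]}$; indeed $1^{[u]}=1$ and $\bigl(X_c^{[i_\ell]}\bigr)^{[u]}=X_c^{[i_\ell+u]}$. Since $[u]$ is a ring homomorphism, it commutes with the determinant, so
\[
f^{\calI+u}(\vec X)=\det Mat^{\calI+u}(\vec X)=\bigl(\det Mat^{\calI}(\vec X)\bigr)^{[u]}=f^{\calI}(\vec X)^{[u]}.
\]
Because the coefficients of $f^{\calI}$ lie in $\FF_q$ and are thus fixed by Frobenius, this is the same as substituting $X_c\mapsto X_c^{[u]}$ into $f^{\calI}$.

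Neither half is deep; the work is bookkeeping. The step I expect to need the most care is the non-cancellation fact used for the degree --- that the Leibniz expansion of this generalised Moore-type determinant really consists of $\lambda!$ pairwise distinct monomials --- which I would isolate as a short standalone claim, proved via the distinctness of the $q^{i_\ell}$. The only other thing worth pinning down is purely notational: one must fix the meaning of $(\cdot)^{[u]}$ on polynomials to be exactly the endomorphism above, so that the phrase ``it commutes with $\det$'' is literally correct; with that settled, the shift property is a one-liner.
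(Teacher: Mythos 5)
The paper states Lemma~\ref{Lemma:PolyMult} without proof, so there is no paper argument to compare against; I evaluate your proposal on its own terms.

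Your treatment of the shift property is correct and complete. Observing that $Mat^{\calI+u}(\vec X)$ is the entrywise image of $Mat^{\calI}(\vec X)$ under the ring endomorphism of $\FF_{q^m}[\vec X]$ that raises coefficients to the $q^u$-th power and sends $X_c \mapsto X_c^{[u]}$, and that any ring endomorphism commutes with $\det$, gives $f^{\calI+u} = (f^{\calI})^{[u]}$ in one line. Your remark that for $f^{\calI}\in\FF_q[\vec X]$ this endomorphism is just the substitution $X_c\mapsto X_c^{[u]}$ (the Frobenius on coefficients acting trivially) is the right thing to pin down.

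The degree part, however, does not prove what the lemma asserts, and you should have flagged this instead of passing over it. Your Leibniz computation is correct: the term attached to $\sigma$ has total degree $\sum_{j\in\calI}[j]-[i_{\sigma^{-1}(0)}]$, and the $\lambda!$ monomials are pairwise distinct since the exponent $[i_{\sigma^{-1}(c)}]$ of $X_c$ determines $\sigma$. You then correctly read off the total degree as $\sum_{j\in\calI}[j]$ \emph{minus the smallest Frobenius weight} $\min_{j\in\calI}[j]$. That is a smaller number than the lemma's claimed $\sum_{j\in\calI}[j]$. Concretely, for $\lambda=2$ and $\calI=\{0,1\}$ one has $f^{\calI}=X_1^{q}-X_1$, of degree $q$, whereas $\sum_{j\in\calI}[j]=1+q$. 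So the degree statement of the lemma, read literally as the total degree of $f^{\calI}$, is false, and your own proof establishes the correct value rather than the stated one; a proof proposal should say so explicitly.

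For what it is worth, this does not break the paper: the quantity $\sum_{j\in\calI}[j]$ is exactly the exponent that appears in the transformation formula $f^{\calI}\mapsto \Delta_A\, D^{-\sum_{j\in\calI}[j]}f^{\calI}$, because it arises as $\deg\bigl(\prod_{\ell}D^{[i_\ell]}\bigr)$ rather than as $\deg(f^{\calI})$; and in the Bezout computation the discrepancies in $\deg(\calF_s)$ and $\deg(f^{\calJ_\lambda-1}f^{\calJ_\lambda})$ cancel, yielding the correct total degree $q^{\lambda+1}-q^{\lambda-s}$ for $\calP_s$. A clean fix is either to correct the lemma to the true degree $\sum_{j\in\calI}[j]-\min_{j\in\calI}[j]$ and adjust the two downstream uses, or to keep $\sum_{j\in\calI}[j]$ but rename it (it is the sum of row-weights, not the degree of the determinant). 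As written, your proposal neither establishes the lemma nor records the conflict, which is the gap.
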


By Cramer's rule, for any $j=1,\ldots,\lambda$ we have
\begin{equation}
\label{Eq:Cramer}
    k_{i_j}^\calI = \dfrac{f^{-(\calI \setminus \{ i_j\} ) \cup \{ 0 \}}(\vec{\beta})}{ f^{-\calI}(\vec{\beta})},
\end{equation}
where $\vec{\beta} = (\beta_1,\ldots,\beta_\lambda)$. Let us define 
$\calJ_s = (\{1,\ldots, \lambda+1\}) \setminus \{ s+1 \} $, for all $s=1,\ldots,\lambda$. From (\ref{Eq:Cramer}), we have 
\[
 \forall s \in \{1,\ldots,\lambda  \},~ k_1^{\calJ_s} = \dfrac{f^{-(\calJ_s \setminus \{ 1\} ) \cup \{ 0 \}}(\vec{\beta})}{ f^{-\calJ_s}(\vec{\beta})}
\]
By elevating the equation to the power $[\lambda+1]$, from Lemma \ref{Lemma:PolyMult} we have
\[
\forall s \in \{1,\ldots,\lambda\},~ (k_1^{\calJ_s})^{[\lambda+1]} = \dfrac{f^{(\lambda+1)-(\calJ_s \setminus \{ 1\} ) \cup \{ 0 \}}(\vec{\beta})}{ f^{(\lambda+1)-\calJ_s}(\vec{\beta})}
\]

Now since we know only the vector space $\calA_1$ and not the exact vectors  $ \bfh_0 + \sum\limits_{j=1}^{\lambda-1}\beta_j^{[-1]} \bfh_j$, we do not know $k_1^{\calJ_s}$. However, we can compute the quantity 
$k_1^{\calJ_\lambda}/k_1^{\calJ_s}$ for $s \in \{1,\ldots, \lambda-1 \}$ thank to Algorithm \ref{algo:as} and Lemma \ref{Lemma:Sum}.
\bigskip

\begin{algorithm}[H]
\label{algo:as}
\KwData{$\{\calA_i\}_{i=1}^{n-k-1}$, $\{\calJ_s\}_{s=1}^{\lambda}$ and the vector $\bfu_0 \in \mathcal{A}_0$}
\KwResult{$\alpha_s = k_1^{\calJ_\lambda}/k_1^{\calJ_s}$ for $s \in \{1,\ldots, \lambda-1 \}$}

For $i = 1,\dots,n-k-1$, fix $\bfu_i$ arbitrarily in $\calA_i$ \\

For $s = 1,\dots,\lambda$, find  $a_j^{\calJ_s}$ such that of $\sum\limits_{j \in \calJ_s}a_j^{\calJ_s}\bfu_j = \bfu_0$\\

Return  $\dfrac{a_1^{\calJ_\lambda}}{a_1^{\calJ_s}}$, for $s = 1,\dots,\lambda-1$

\caption{Determining quotient $k_1^{\calJ_\lambda}/k_1^{\calJ_s}$}
\end{algorithm}

\bigskip
Now let us define by $\alpha_s = (k_1^{\calJ_\lambda}/k_1^{\calJ_s})^{[\lambda+1]}$, for $s = 1,\dots,\lambda-1$. To simplify notations, we also define 
\[
\forall s \in \{1,\ldots, \lambda\}
~ \left\{ \begin{array}{l}
\mathcal{L}_s =  (\lambda+1)-(\calJ_s \setminus \{ 1\}  \cup \{ 0 \})\\
\mathcal{M}_s = (\lambda+1)-\calJ_s
\end{array}
\right.
\]

We obtain the set of equations

\[
 \forall s \in \{1,\ldots, \lambda-1\}, \quad f^{\calL_\lambda}(\vec{\beta})f^{\calM_s}(\vec{\beta}) - \alpha_s f^{\calM_\lambda}(\vec{\beta})f^{\calL_s}(\vec{\beta}) = 0
\]

Let 
\[ \calF_s(\vec{X})  \stackrel{def}{=} f^{\calL_\lambda}(\vec{X})f^{\calM_s}(\vec{X}) - \alpha_s f^{\calM_\lambda}(\vec{X})f^{\calL_s}(\vec{X}) \in \FF_{q^m}[\vec{X}].
\] 
The polynomial $ \calF_s$ has degree $q^{\lambda+1}+q^{\lambda}+2 \sum\limits_{j=1}^{\lambda-1}q^j +1-q^{\lambda-s}$

This gives us a multivariate polynomial system over $\FF_{2^m}$ for which $\vec{\beta}$ is a solution. However, from our hypotheses we can do better and even reduce the degrees of the polynomials. 

Since $\beta_1,\ldots,\beta_\lambda$ are linearly independent they cannot be roots of linear factors over $\FF_q$ of $\calF_s$. Therefore
 we can reduce for all $s$ the polynomial $\calF_s(\vec{X})$ by its $\FF_q$-linear factors.  

\begin{lemma} Let us define 
\[
f_0(\vec{X}) = \prod\limits_{a \in \FF_q} (X_1+a)\prod\limits_{i=2}^{\lambda-1}\left(\prod\limits_{a_0,\dots a_{i-1} \in \FF_q}(X_i + \sum\limits_{j=1}^{i-1}a_jX_j +a_0)\right)
\]

 For any set $\calI$ of cardinality $\lambda$,  $f^{\calI}(\vec{X})$ is divisible by $f_0(\vec{X})$
\end{lemma}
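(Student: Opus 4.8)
The plan is to recognise $f^{\calI}$ as a generalised Moore determinant and to exploit the classical fact that such a determinant vanishes on every $\FF_q$-linear dependency among its arguments. Adopt the convention $X_0 \mydef 1$, so that the $(\ell,t)$-entry of $Mat^{\calI}(\vec X)$ is $X_t^{[i_\ell]}$ for $t = 0,\dots,\lambda-1$ and $\ell = 1,\dots,\lambda$ (the all-ones first column being the case $t=0$). Since the Frobenius $Y\mapsto Y^{[i]}$ is additive and fixes $\FF_q$, any relation $\sum_{t=0}^{\lambda-1} c_t X_t = 0$ with $c_t \in \FF_q$ propagates to the columns of the matrix: $\sum_t c_t X_t^{[i_\ell]} = \bigl(\sum_t c_t X_t\bigr)^{[i_\ell]} = 0$ for every $\ell$. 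Hence $f^{\calI} = \det Mat^{\calI}$ vanishes as soon as $1,X_1,\dots,X_{\lambda-1}$ become $\FF_q$-linearly dependent.

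First I would turn this into a divisibility statement. Observe that the factors occurring in $f_0$ are exactly the affine $\FF_q$-forms that are monic in their top variable: for $i = 1,\dots,\lambda-1$ and $\vec a = (a_0,\dots,a_{i-1}) \in \FF_q^{i}$, put $L_{i,\vec a}(\vec X) \mydef X_i + \sum_{j=1}^{i-1} a_j X_j + a_0$ (for $i=1$ the sum is empty, giving precisely the factors $X_1 + a_0$ of the first product in the statement). Fix one such $L_{i,\vec a}$ and pass to $R \mydef \FF_q[\vec X]/(L_{i,\vec a})$; solving for $X_i$ identifies $R$ with a polynomial ring in the remaining variables, so $R$ is a domain and $a_0 + \sum_{j<i} a_j X_j + X_i \equiv 0$ in $R$. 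Feeding this relation into the Frobenius argument above shows that, in $R$, the combination $a_0\cdot(\mathrm{col}\,0) + \sum_{j=1}^{i-1} a_j\cdot(\mathrm{col}\,j) + 1\cdot(\mathrm{col}\,i)$ of the columns of $Mat^{\calI}$ is zero; this dependency is non-trivial (the coefficient of column $i$ equals $1$), so $f^{\calI}$ maps to $0$ in $R$. As $\FF_q[\vec X]$ is a UFD and $L_{i,\vec a}$ is irreducible, this says exactly $L_{i,\vec a} \mid f^{\calI}$.

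It then remains to assemble the factors. Distinct tuples $(i;\vec a)$ give distinct forms, each monic in its top variable, hence pairwise non-associate irreducibles; therefore $f_0$ is exactly their product, each occurring with multiplicity one. Since every one of these primes divides $f^{\calI}$, so does $f_0$, which is the assertion. (As a consistency check, when $\calI = \{0,1,\dots,\lambda-1\}$ a degree count gives $\deg f^{\calI} = \deg f_0 = \sum_{i=1}^{\lambda-1} q^i$, so there $f^{\calI} = c\,f_0$ with $c \in \FF_q^{*}$, recovering the classical Moore factorisation; for other $\calI$ the quotient $f^{\calI}/f_0$ is a genuine polynomial.) I do not expect a real obstacle here: the only delicate point is to run the ``vanishing on a linear dependency'' argument at the level of polynomial divisibility — not merely for substitutions of field elements — which the quotient-ring formulation handles cleanly, together with checking that the factors of $f_0$ are precisely the monic affine $\FF_q$-forms so that none is double-counted.
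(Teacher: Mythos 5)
Your argument is essentially the same as the paper's: an $\FF_q$-linear relation among $1,X_1,\dots,X_{\lambda-1}$ propagates through the Frobenius powers to a linear dependency among the columns of $Mat^{\calI}$, killing the determinant, and then irreducibility and pairwise non-association of the affine forms assemble into divisibility by $f_0$. You carry out the divisibility step more carefully than the paper does — by passing to the domain $\FF_q[\vec X]/(L_{i,\vec a})$ rather than just noting that $f^{\calI}$ vanishes at roots of each linear factor and leaving the inference $L_{i,\vec a}\mid f^{\calI}$ implicit — but it is the same underlying proof.
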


\begin{proof}
Let  $\vec{\beta}$ be a root of $X_i + \sum\limits_{j=1}^{i-1}a_jX_j +a_0$ then they are $\FF_q$ co-linear. Hence, for all set of cardinality $\lambda$, $\calI$, the corresponding columns of $Mat^\calI(\vec{\beta})$ are co-linear. Therefore, $f^{\calI}(\vec{\beta}) = \det(Mat^{\calI}(\vec{\beta})) = 0$
\end{proof}

We have the following two corollaries

\begin{corollary}
We have $f^{\calJ_\lambda -1}(\vec{X}) = f_0(\vec{X})$     
\end{corollary}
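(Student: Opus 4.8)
The statement is purely about a determinant, so the plan is to identify $Mat^{\calJ_\lambda-1}(\vec X)$ as a Moore matrix and then apply the classical Moore determinant factorization. First note that $\calJ_\lambda = \{1,\dots,\lambda+1\}\setminus\{\lambda+1\} = \{1,\dots,\lambda\}$, hence $\calJ_\lambda - 1 = \{0,1,\dots,\lambda-1\}$, and
\[
Mat^{\calJ_\lambda-1}(\vec X) =
\begin{bmatrix}
1 & X_1 & X_2 & \cdots & X_{\lambda-1}\\
1 & X_1^{[1]} & X_2^{[1]} & \cdots & X_{\lambda-1}^{[1]}\\
\vdots & \vdots & \vdots & \ddots & \vdots\\
1 & X_1^{[\lambda-1]} & X_2^{[\lambda-1]} & \cdots & X_{\lambda-1}^{[\lambda-1]}
\end{bmatrix}.
\]
This is exactly the Moore matrix attached to the $\lambda$-tuple $(w_1,\dots,w_\lambda) \mathrel{\mathop:}= (1,X_1,\dots,X_{\lambda-1})$, its rows being the successive Frobenius images $[0],[1],\dots,[\lambda-1]$ of the first row.

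\textbf{Main step.} I would then invoke the Moore determinant formula, which gives
\[
f^{\calJ_\lambda-1}(\vec X) = \det Mat^{\calJ_\lambda-1}(\vec X) = \prod_{i=1}^{\lambda}\ \prod_{(c_1,\dots,c_{i-1})\in\FF_q^{i-1}}\bigl(w_i + c_1 w_1 + \cdots + c_{i-1}w_{i-1}\bigr),
\]
and unwind the right-hand side. The $i=1$ factor is the constant $w_1=1$; the $i=2$ factor is $\prod_{c_1\in\FF_q}(X_1+c_1)=\prod_{a\in\FF_q}(X_1+a)$; and for $3\le i\le\lambda$, writing $i=m+1$ with $2\le m\le\lambda-1$ and relabelling $a_0\mathrel{\mathop:}=c_1$, $a_j\mathrel{\mathop:}=c_{j+1}$ for $1\le j\le m-1$, the $i$-th factor becomes $\prod_{(a_0,\dots,a_{m-1})\in\FF_q^{m}}\bigl(X_m + \sum_{j=1}^{m-1}a_jX_j + a_0\bigr)$. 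Multiplying all factors together reproduces verbatim the defining product of $f_0(\vec X)$, which is the claim. (As a sanity check one can also match degrees: both sides have degree $q+q^2+\cdots+q^{\lambda-1}$, the top-degree monomial of $\det Mat^{\calJ_\lambda-1}$ arising by pairing the non-constant columns $X_1,\dots,X_{\lambda-1}$ with the rows of Frobenius exponents $1,\dots,\lambda-1$.)

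\textbf{Expected obstacle.} There is no genuine difficulty here: the only care needed is the bookkeeping of the $\FF_q$-coefficients when passing from the Moore formula's parametrization $(c_1,\dots,c_{i-1})$ to the parametrization $(a_0,\dots,a_{i-2})$ used in the definition of $f_0$, and a check of the sign/normalization convention in the Moore formula (with rows ordered by increasing Frobenius exponent, the determinant equals the product with coefficient $1$). If one wishes to keep the argument self-contained and not cite Moore's identity, it can be reproved by induction on $\lambda$, expanding $\det Mat^{\calJ_\lambda-1}$ along the last row; alternatively, combine the preceding lemma (which gives $f_0 \mid f^{\calJ_\lambda-1}$ for any index set of size $\lambda$) with the degree equality above and a single leading-coefficient comparison to conclude that the quotient is $1$. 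Either route is routine.
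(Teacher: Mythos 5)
Your primary argument is correct and takes a genuinely different route from the paper. The paper proves the identity indirectly: it invokes the preceding lemma (which gives $f_0(\vec X) \mid f^{\calJ_\lambda-1}(\vec X)$), observes that both polynomials are monic, and compares degrees to conclude the quotient is $1$. You instead compute $\det Mat^{\calJ_\lambda-1}(\vec X)$ outright by recognizing it as a Moore matrix in the tuple $(1,X_1,\dots,X_{\lambda-1})$ and applying the classical Moore determinant factorization, then matching the resulting product with the defining product of $f_0$ after a reindexing of the $\FF_q$-coefficients. Your route is more self-contained and more informative (it produces $f_0$ as the exact value rather than deducing equality from a divisibility bound), at the cost of citing Moore's identity; the paper's route is shorter given that it has already established the divisibility lemma one step earlier. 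You also sketch the paper's approach as a fallback at the end (``combine the preceding lemma\ldots with the degree equality\ldots and a single leading-coefficient comparison''), so you have both. One small remark: your degree count $q+q^2+\cdots+q^{\lambda-1}$ is the correct degree of both sides, whereas the paper writes $\sum_{i=0}^{\lambda-1}q^i$, apparently importing the formula of Lemma~\ref{Lemma:PolyMult} without accounting for the constant first column; this is an off-by-one slip in the paper's exposition (harmless to its conclusion since the two degrees it compares are still equal), and your version fixes it.
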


\begin{proof}
Both polynomials are monic. Since $\calJ_{\lambda}-1 = \{0,\ldots, \lambda-1\}$, they also have the same degree $\sum\limits_{i=0}^{\lambda - 1}q^i$ 
\end{proof}

\begin{corollary}
\label{Cor:F_vec}
   For all    $\calI = \{i_1,\dots,i_\lambda\}$, we have  $\left(f_0(\vec{X})\right)^{[i_1]} | f^\calI(\vec{X})$
\end{corollary}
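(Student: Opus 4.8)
The statement to prove is: for any $\calI = \{i_1,\dots,i_\lambda\}$ (with, say, $i_1$ the smallest index, or just any fixed labelling), the polynomial $\left(f_0(\vec{X})\right)^{[i_1]}$ divides $f^\calI(\vec{X})$. The natural strategy is to reduce this to the previous lemma and corollary by exploiting the translation behaviour of $f^\calI$ recorded in Lemma \ref{Lemma:PolyMult}, namely $f^{\calI+u}(\vec{X}) = f^{\calI}(\vec{X})^{[u]}$.

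First I would set $u = i_1$ and write $\calI = \calI' + u$ where $\calI' = \{0, i_2 - i_1, \dots, i_\lambda - i_1\}$; note $0 \in \calI'$ and $\calI'$ still has cardinality $\lambda$ (the shifts are distinct because the $i_j$ are distinct). By Lemma \ref{Lemma:PolyMult}, $f^{\calI}(\vec{X}) = f^{\calI'}(\vec{X})^{[i_1]}$. Now apply the preceding Lemma (the one stating $f_0 \mid f^\calK$ for every $\calK$ of cardinality $\lambda$) to $\calK = \calI'$: this gives $f_0(\vec{X}) \mid f^{\calI'}(\vec{X})$, i.e. $f^{\calI'}(\vec{X}) = f_0(\vec{X}) \cdot g(\vec{X})$ for some $g \in \FF_q[\vec{X}]$. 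Raising to the Frobenius power $[i_1]$ (which is a ring homomorphism on $\FF_{q^m}[\vec{X}]$, acting on coefficients and commuting with products), we get $f^{\calI}(\vec{X}) = f^{\calI'}(\vec{X})^{[i_1]} = \left(f_0(\vec{X})\right)^{[i_1]} \cdot g(\vec{X})^{[i_1]}$, which exhibits $\left(f_0(\vec{X})\right)^{[i_1]}$ as a divisor of $f^\calI(\vec{X})$.

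The only point that deserves a word of care — and the closest thing to an obstacle, though it is minor — is checking that the $[\,\cdot\,]$ operation really is a multiplicative ring endomorphism of the polynomial ring, so that divisibility is preserved under it; this is immediate since $[u]$ denotes the $q^u$-power Frobenius applied coefficientwise, and $f_0$ has coefficients in $\FF_q$ so in fact $\deg f_0^{[i_1]} = \deg f_0 = \sum_{i=0}^{\lambda-1} q^i$ stays the same (indeed $f_0^{[i_1]}$ is again a polynomial of the same shape, with the linear forms' coefficients permuted by Frobenius). One should also note, if one wants $i_1$ to be the minimal element and hence $f_0^{[i_1]}$ to be the ``largest'' such factor, that the argument above works with $i_1$ replaced by any element of $\calI$; choosing the minimal one gives the sharpest statement, matching the role this corollary plays in bounding the degree of the reduced polynomials $\calF_s$ in the next step. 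I would keep the write-up to three or four lines, since it is essentially a one-line consequence of Lemma \ref{Lemma:PolyMult} and the previous lemma.
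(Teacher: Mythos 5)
Your argument is structurally the right one and is essentially what the paper has in mind (the paper states this corollary without a written proof, so the comparison is to the obvious combination of Lemma~\ref{Lemma:PolyMult} with the preceding lemma): write $\calI = \calI' + i_1$ with $0 \in \calI'$, use $f_0 \mid f^{\calI'}$, and transport divisibility across $f^{\calI} = \bigl(f^{\calI'}\bigr)^{[i_1]}$.

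However, your parenthetical explanation of the operation $[\,\cdot\,]$ on polynomials is wrong, and the error matters for how the corollary is used later. You say $[u]$ is the $q^u$-power Frobenius applied \emph{coefficientwise}, so that $\deg f_0^{[i_1]} = \deg f_0$; if that were so, then $f_0^{[i_1]} = f_0$ (since $f_0 \in \FF_q[\vec X]$) and this corollary would collapse to the previous lemma. In fact, in this paper $P^{[u]}$ on a polynomial means $P^{q^u}$ (equivalently, substituting $X_j \mapsto X_j^{q^u}$, which agrees with $P^{q^u}$ precisely because $f^\calI$ has coefficients in $\FF_q$). One can see this directly from Lemma~\ref{Lemma:PolyMult}: $\deg f^\calI = \sum_{j\in\calI} q^j$, so $\deg f^{\calI + u} = q^u \deg f^\calI$, forcing $f^{\calI+u} = \bigl(f^\calI\bigr)^{q^u}$. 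It is also what makes the corollary useful: in the next paragraph the paper multiplies the $f_0$ from $f^{\calM_s}$ by the $f_0^{[1]} = f_0^q$ from $f^{\calM_\lambda}$ to get the divisor $f_0^{q+1}$ of $\calF_s$. The good news is that $P \mapsto P^{q^{i_1}}$ is still a ring endomorphism in characteristic $p$, so your main step ``$f_0 \mid f^{\calI'}$ implies $f_0^{[i_1]} \mid f^{\calI}$'' survives the correction; but $\deg f_0^{[i_1]} = q^{i_1}\sum_{i=0}^{\lambda-1} q^i$, not $\sum_{i=0}^{\lambda-1} q^i$.
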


We have
\begin{itemize}
  \item From the lemma: $f_0(\vec{X})$ divides $f^{\calM_s}(\vec{X})$ and $f^{\calL_s}(\vec{X})$ for all $s \in \{1,\ldots,\lambda-1\}$.
  \item From corollary \ref{Cor:F_vec} : $f_0(\vec{X})^{[1]}$ divides $f^{\calM_\lambda}(\vec{X})$ and  $f^{\calL_\lambda}(\vec{X})$, since the minimum index of the sets is equal to $1$.
\end{itemize}
Therefore, for all $s \in \{1,\ldots,\lambda-1\}$, $\calF_s(\vec{X})$ can be divided by $f_0(\vec{X})^{q+1}$. We now consider the reduced polynomials  
\[
 \forall   s \in \{1,\ldots,\lambda-1\}, \quad \calP_s(\vec{X}) \stackrel{def}{=} \dfrac{\calF_s(\vec{X})}{(f_0(\vec{X}))^{q+1}} = \dfrac{\calF_s(\vec{X})}{
f^{\calJ_\lambda-1}(\vec{X})f^{\calJ_\lambda}(\vec{X})}
\]
This gives us a new polynomial system for which $\vec{\beta}$ is also a solution, but the degree is reduced.

\begin{lemma}
Let $\vec{A} = (a_{i,j})_{i=0,j=0}^{\lambda-1,\lambda-1} \in \textbf{PGL}(\lambda; \FF_q)$. Consider the tranformation on  $f^{\calI}(\vec{X})$ defined on $\vec{X}=(X_1,\ldots, X_{\lambda-1})$ by 
\[
\forall j \in \{1,\ldots,\lambda-1\}, \quad X_j \longmapsto \dfrac{a_{0,j} +\sum\limits_{i=1}^{\lambda-1}a_{i,j}X_i}
{a_{0,0} + \sum\limits_{i=1}^{\lambda-1}a_{i,0}X_i}
\]
then the polynomial $f^{\calI}(\vec{X})$ is transformed into
\[
 f^\calI(\vec{X}) \longmapsto  \vec{A}.f^{\calI}(\vec{X}) \stackrel{def}{=} 
   \dfrac{\Delta_A}{(a_{0,0} + \sum\limits_{i=1}^{\lambda-1}a_{i,0}X_i)^{\deg(f^{\calI})}}f^\calI(\vec{X})
\]
where $\Delta_A$ is the determinant of $\bfA$.
\end{lemma}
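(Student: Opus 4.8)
The plan is to turn the linear-fractional substitution into elementary row operations on $Mat^\calI(\vec{X})$ and then invoke multiplicativity of the determinant. Fix a representative $\bfA = (a_{i,j})_{i,j=0}^{\lambda-1} \in GL_\lambda(\FF_q)$ of the given class, write $\calI = \{i_1,\dots,i_\lambda\}$, and set
\[
D(\vec{X}) \mydef a_{0,0} + \sum_{i=1}^{\lambda-1} a_{i,0}X_i, \qquad N_j(\vec{X}) \mydef a_{0,j} + \sum_{i=1}^{\lambda-1} a_{i,j}X_i \quad (1 \le j \le \lambda-1),
\]
so that the substitution is $X_j \mapsto N_j(\vec{X})/D(\vec{X})$.

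First I would check how the substitution interacts with the Frobenius twists. The entry of $Mat^\calI(\vec{X})$ in row $i_\ell$ and column $j \ge 1$ is $X_j^{[i_\ell]} = X_j^{q^{i_\ell}}$, so after the substitution it becomes $\bigl(N_j(\vec{X})/D(\vec{X})\bigr)^{q^{i_\ell}}$. Since $q^{i_\ell}$ is a power of the characteristic, $x \mapsto x^{q^{i_\ell}}$ is a ring homomorphism, and since every $a_{i,j}$ lies in $\FF_q$ it is a fixed point of it; therefore
\[
\Bigl(\tfrac{N_j(\vec{X})}{D(\vec{X})}\Bigr)^{q^{i_\ell}} = \frac{a_{0,j} + \sum_{i=1}^{\lambda-1} a_{i,j}X_i^{[i_\ell]}}{a_{0,0} + \sum_{i=1}^{\lambda-1} a_{i,0}X_i^{[i_\ell]}},
\]
and in particular the common denominator of the $i_\ell$-th row of the substituted matrix equals $D(\vec{X})^{[i_\ell]}$.

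Next I would clear denominators row by row: multiply the $\ell$-th row of the substituted matrix by $D(\vec{X})^{[i_\ell]}$. By the displayed identity, row $i_\ell$ then becomes $\bigl(1, X_1^{[i_\ell]}, \dots, X_{\lambda-1}^{[i_\ell]}\bigr)\,\bfA$, i.e.\ the original $i_\ell$-th row of $Mat^\calI(\vec{X})$ right-multiplied by $\bfA$. Doing this to every row shows that the row-rescaled substituted matrix equals $Mat^\calI(\vec{X})\,\bfA$. Taking determinants (they commute with the substitution, being polynomials in the entries) and using $\det(\bfA) = \Delta_A$ gives
\[
\Bigl(\prod_{\ell=1}^{\lambda} D(\vec{X})^{[i_\ell]}\Bigr)\cdot\bigl(\bfA.f^\calI\bigr)(\vec{X}) = f^\calI(\vec{X})\,\Delta_A .
\]
Finally, $\prod_{\ell=1}^{\lambda} D(\vec{X})^{[i_\ell]} = D(\vec{X})^{\sum_{\ell} q^{i_\ell}}$, and by Lemma~\ref{Lemma:PolyMult} the exponent $\sum_{\ell} q^{i_\ell} = \sum_{j \in \calI} q^{j}$ equals $\deg(f^\calI)$; dividing through yields the asserted identity $\bfA.f^\calI(\vec{X}) = \Delta_A\, f^\calI(\vec{X}) / D(\vec{X})^{\deg(f^\calI)}$.

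The only point with any content is the commutation in the second paragraph: it works precisely because $\bfA$ has entries in $\FF_q$ and the $q^{i_\ell}$-th power map is additive in characteristic $p$, which is what lets the twist pass through the linear-fractional substitution; everything else (the row operations, multiplicativity of $\det$, and the bookkeeping that the accumulated denominator has the exact exponent $\deg f^\calI$ forced by Lemma~\ref{Lemma:PolyMult}) is routine. One caveat worth recording is that the right-hand side depends on the chosen lift $\bfA$ — replacing $\bfA$ by $c\bfA$ with $c \in \FF_q^\ast$ scales it by $c^{\lambda-\deg f^\calI}$ — so the statement should be read for a fixed $\bfA \in GL_\lambda(\FF_q)$; this is harmless in the sequel, where only the zero locus of $\bfA.f^\calI$ is used.
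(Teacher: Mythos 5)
Your proof is correct and follows essentially the same route as the paper: clear the denominator $D(\vec{X})^{[i_\ell]}$ from each row, recognize the result as $Mat^\calI(\vec{X})\bfA$, take determinants, and identify the accumulated exponent $\sum_{\ell}q^{i_\ell}$ with $\deg f^\calI$ via Lemma~\ref{Lemma:PolyMult}. Your closing caveat about the dependence on the chosen lift $\bfA\in GL_\lambda(\FF_q)$ (scaling by $c^{\lambda-\deg f^\calI}$) is a correct observation the paper glosses over, but it does not change the argument since only the zero locus is used downstream.
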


\begin{proof}

    Let $D = a_{0,0} + \sum\limits_{i=1}^{\lambda-1}a_{i,0}X_i$.
    Thus, for $j=1,\dots,\lambda$, the $j$th row of  $Mat^\calI(\vec{X})$ denoted by $Row_j(Mat^\calI(\vec{X}))$
    becomes
     \[  
       Row_j(Mat^\calI(\vec{X})) \longmapsto \dfrac{Row_j\left(Mat^\calI(\vec{X}) \cdot \bfA\right)}{D^{[i_j]}}.
   \]
   Therefore, since $\deg(f^{\calI}) = \sum_{j \in \calI}{[j]}$, from lemma \ref{Lemma:PolyMult}, we obtain 
   
   \begin{align*}
       \det(Mat^\calI(\vec{X})) &\longmapsto \dfrac{ \det \vec{A}}{D^{\deg(f^{\calI})}}\det(Mat^\calI(\vec{X}))\\
       f^\calI(\vec{X}) &\longmapsto \dfrac{\Delta_A}{D^{\deg(f^{\calI})}} f^\calI(\vec{X})
   \end{align*}
\end{proof}
Apply the lemma, we have

\begin{align*}
    \calF_s(\vec{X}) &\longmapsto \dfrac{\Delta_A^2}{
D^{q^{\lambda+1}+q^{\lambda}+2\sum\limits_{j=1}^{\lambda-1}q^j+1-q^{\lambda-s}}}\calF_s(\vec{X})\\
    f^{\calJ_\lambda-1}(\vec{X}) &\longmapsto \dfrac{\Delta_A}{
D^{\sum\limits_{j=0}^{\lambda-1}q^j}} f^{\calJ_\lambda-1}(\vec{X})\\
    f^{\calJ_\lambda}(\vec{X})  &\longmapsto \dfrac{\Delta_A}{
D^{\sum\limits_{j=1}^{\lambda}q^j}} f^{\calJ_\lambda}(\vec{X}) 
\end{align*}

Hence,
\[
    \calP_s(\vec{X}) \mapsto \dfrac{1}{
D^{q^{\lambda+1}-q^{\lambda-s}}} \calP_s(\vec{X})
\]

We therefore have 
\begin{proposition}
\label{prop:stable}
If there isn't any common factor between the polynomials $\calP_s(X)$, then the set of root of the polynomial system 
\begin{equation}
\label{Eq:SystBeta}
\forall i =1,\ldots,  \lambda-1,\quad \calP_i(\vec{X}) = 0
\end{equation}
equals the orbit of any root under the group action of $\textbf{PGL}(\lambda, \FF_q)$
\end{proposition}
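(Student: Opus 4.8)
The plan is to establish two inclusions: (i) every $\textbf{PGL}(\lambda,\FF_q)$-translate of a root is again a root, and (ii) every root lies in the orbit of one fixed root $\vec{\beta}_0$, namely the one arising from a genuine secret key $(\bfh,\vec{\beta}_0)\in\calS$, which is a root of the system (\ref{Eq:SystBeta}) by the very way the constants $\alpha_s$ were defined. Together (i) and (ii) say that the root set is a single orbit, hence equal to the orbit of any of its points.

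For (i) I would use the transformation law obtained just before the statement: for $\bfA=(a_{i,j})\in\textbf{PGL}(\lambda,\FF_q)$ and $D=a_{0,0}+\sum_{i=1}^{\lambda-1}a_{i,0}X_i$ one has $\calP_s(\vec{X})\mapsto D^{-(q^{\lambda+1}-q^{\lambda-s})}\calP_s(\vec{X})$. Since every $\FF_q$-linear factor has been divided out in passing from $\calF_s$ to $\calP_s$, a root $\vec{\beta}$ has $\FF_q$-linearly independent coordinates $1,\beta_1,\dots,\beta_{\lambda-1}$, so $D(\vec{\beta})=a_{0,0}+\sum_i a_{i,0}\beta_i\neq 0$ because $(a_{0,0},\dots,a_{\lambda-1,0})$ is the nonzero first column of a nonsingular $\FF_q$-matrix. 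Hence $\calP_s(\bfA\cdot\vec{\beta})=0$ for all $s$, i.e. the whole orbit of a root consists of roots.

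For (ii), let $\vec{\gamma}$ be a root. I would first rewrite the system in terms of the Cramer coefficients $k_1^{\calJ_s}$. Setting $\rho_s(\vec{X}):=f^{\calL_s}(\vec{X})/f^{\calM_s}(\vec{X})$, Lemma \ref{Lemma:PolyMult} together with (\ref{Eq:Cramer}) gives $\rho_s(\vec{X})=\bigl(k_1^{\calJ_s}(\vec{X})\bigr)^{[\lambda+1]}$, and by definition $\alpha_s=\rho_\lambda(\vec{\beta}_0)/\rho_s(\vec{\beta}_0)$; hence $\calF_s(\vec{X})=0$ is equivalent to $\rho_s(\vec{X})/\rho_s(\vec{\beta}_0)=\rho_\lambda(\vec{X})/\rho_\lambda(\vec{\beta}_0)$. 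Therefore the vanishing of all $\calP_s$ at $\vec{\gamma}$ amounts to
\[
k_1^{\calJ_s}(\vec{\gamma})=\sigma\, k_1^{\calJ_s}(\vec{\beta}_0),\qquad s=1,\dots,\lambda,
\]
for a single $\sigma\in\FF_{q^m}^{*}$. Along the way one must check, using the $\FF_q$-independence of the coordinates of $\vec{\gamma}$ and Assumption \ref{asp:1} applied to $\vec{\gamma}$, that none of the Moore-type determinants $f^{\calM_s}(\vec{\gamma})$ and $f^{\calM_\lambda}(\vec{\gamma})$ vanish, so that $\calP_s$ and $\calF_s$ have the same zeros there. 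Recall that $k_1^{\calJ_s}(\vec{X})$ is the coefficient of $\vec{X}^{[-1]}$ when $\vec{X}^{[0]}=(1,X_1,\dots,X_{\lambda-1})$ is expanded in the basis $\{\vec{X}^{[-i]}:i\in\calJ_s\}$ of $\FF_{q^m}^{\lambda}$.

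The heart of the proof is to descend from these $\lambda$ scalar identities to an $\FF_q$-rational change of basis between the Moore sequences of $\vec{\gamma}$ and $\vec{\beta}_0$. Using that $\{\vec{\gamma}^{[-i]}\}_{i=1}^{\lambda}$ is a basis, I would define $\bfN\in GL_\lambda(\FF_{q^m})$ by $\vec{\gamma}^{[-i]}\bfN=c_i\vec{\beta}_0^{[-i]}$ for $i=1,\dots,\lambda$, with the scalars $c_i$ fixed by the $\calJ_\lambda$-expansion of $\vec{\gamma}^{[0]}$, and then use the $\calJ_s$-expansions of $\vec{\gamma}^{[0]}$ and of $\vec{\gamma}^{[-(\lambda+1)]}$ for all $s$, combined with the identities $k_1^{\calJ_s}(\vec{\gamma})=\sigma k_1^{\calJ_s}(\vec{\beta}_0)$, to propagate the relation $\vec{\gamma}^{[-i]}\bfN=c_i\vec{\beta}_0^{[-i]}$ to $i=0$ and $i=\lambda+1$. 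Applying the inverse Frobenius to these relations and comparing shifted instances yields $\vec{\gamma}^{[-(i+1)]}\bigl(\bfN^{[-1]}-(c_i^{[-1]}/c_{i+1})\bfN\bigr)=0$; since the vectors $\vec{\gamma}^{[-(i+1)]}$, $i=0,\dots,\lambda-1$, span $\FF_{q^m}^{\lambda}$ and (again by the mutual overlap of the $\calJ_s$-relations) the ratios $c_i^{[-1]}/c_{i+1}$ are forced to a common value, we obtain $\bfN^{[-1]}=\nu\bfN$ for some $\nu\in\FF_{q^m}^{*}$; a norm and Hilbert 90 argument then makes $\bfN$ an $\FF_{q^m}^{*}$-multiple of a matrix over $\FF_q$. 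Feeding $\bfN$ into the correspondence between such matrices and the $\textbf{PGL}(\lambda,\FF_q)$-action, as in the remark following the Step 1 theorem, shows $\vec{\gamma}=\bfA\cdot\vec{\beta}_0$ for some $\bfA\in\textbf{PGL}(\lambda,\FF_q)$, completing (ii). The hypothesis that the $\calP_s$ have no common factor is used to ensure that the root set is finite, so that it can coincide with a finite orbit, and, together with the non-degeneracy checks above, that no spurious roots survive on the vanishing loci of the determinants $f^{\calI}$. I expect the descent step to be the main obstacle: one must verify that all $\lambda$ identities $k_1^{\calJ_s}(\vec{\gamma})=\sigma k_1^{\calJ_s}(\vec{\beta}_0)$, and not merely one of them, are needed and suffice to make the $c_i$ Frobenius-compatible and hence $\bfN$ defined over $\FF_q$ up to a scalar; the bookkeeping that a root of the reduced system never kills the relevant Moore determinants is the secondary technical difficulty.
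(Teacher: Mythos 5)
Your (i) is exactly the paper's observation that the orbit of a root is again contained in the root set. Your (ii), however — the direct construction of an $\FF_q$-rational matrix taking $\vec{\beta}_0$ to an arbitrary root $\vec{\gamma}$ — is not what the paper does, and as written it has genuine gaps. The paper's proof is a short counting argument: the no-common-factor hypothesis is invoked for the Bezout bound, giving that the number of roots is at most $\prod_{j=1}^{\lambda-1}(q^{\lambda+1}-q^j)=|\textbf{PGL}(\lambda,\FF_q)|$; Assumption \ref{asp:3} says the stabilizer of $\vec{\beta}_0$ in $\textbf{PGL}(\lambda,\FF_q)$ is trivial, so the orbit of $\vec{\beta}_0$ already has exactly that cardinality; containment of the orbit in the root set together with the matching cardinalities forces equality. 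You never invoke Assumption \ref{asp:3}, which is the essential second half of the paper's argument, and you only use the no-common-factor hypothesis for "finiteness" rather than for the precise Bezout cardinality — so your plan does not reduce to the paper's even after filling in details.

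Beyond being a different route, your (ii) has concrete holes. The descent from the $\lambda$ scalar identities $k_1^{\calJ_s}(\vec{\gamma})=\sigma\,k_1^{\calJ_s}(\vec{\beta}_0)$ to the relation $\bfN^{[-1]}=\nu\bfN$ is left as a sketch, and you flag it yourself as the main obstacle: it is not established that these $\lambda$ numbers (only the coefficient of index $1$ in each $\calJ_s$-expansion, not the full set of Cramer coefficients) pin down the scalars $c_i$ consistently. Moreover, you invoke Assumption \ref{asp:1} applied to an arbitrary root $\vec{\gamma}$ to guarantee the non-vanishing of the Moore determinants $f^{\calM_s}(\vec{\gamma})$, but Assumption \ref{asp:1} only asserts the existence of one good element of $\calS$ — it says nothing a priori about other roots of the reduced system $\calP_s=0$. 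Until those two points are closed your (ii) is a plan, not a proof; the paper's counting argument sidesteps both difficulties and you should follow it.
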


\begin{proof}
If there isn't any common factor between the polynomials $\calP_s(X)$ then the number of roots is at bounded by $\prod\limits_{j=1}^{\lambda-1}(q^{\lambda+1} -q^j) = |\textbf{PGL}(\lambda,q)|$ (Bezout bound \cite{FGHR13}). Moreover, any element in the orbit of a solution $\beta$ under the group action of $\textbf{PGL}(\lambda, \FF_q)$ is again root of the system. From Assumption \ref{asp:3}  the orbit of $\beta$ under $\textbf{PGL}(\lambda, \FF_q)$ has cardinality $= |\textbf{PGL}(\lambda, \FF_q)| $ which means that the stabilization
of $\beta$ with respect to this group action is trivial. In that case any root of the system (\ref{Eq:SystBeta}) corresponds to an element of $\mathcal{S}$.
\end{proof}

For instance, when $q=2$ and $\lambda =3$ the system of equation below taking ($\beta_1,\beta_2$) as solution:

\[
      \begin{cases} 
   Pr_1(X,Y) = 0 \\
   Pr_2(X,Y) = 0
  \end{cases}
\]
 
This is a system of 2 polynomial equation in 2 variables. In practice, by using MAGMA, we can see that there isn't any common factor between $Pr_1(X)$ and $Pr_2(X)$. Therefore, the number of roots has Bezout 's upper bound by the product of the degrees of $Pr_1(X,Y)$ and $Pr_2(X,Y)$. 

Therefore, the number of roots are at most $(q^4-q)(q^4-q^2) = |\textbf{PGL}(3,q)|$. Thus, all the roots are in the orbit of a root under an action of $\textbf{PGL}(3,q)$

The remaining problem is finding a root of the system of equation above. It can be done by the following steps:

\begin{enumerate}
    \item Calculating $Res(Pr_1, Pr_2, Y)$ the resultant of $Pr_1$ and $Pr_2$ in the variable $Y$. We obtain a univariate polynomial of degree 168 in variable $X$. Finding one root $x_0$ of this polynomial.
    \item Calculating $\gcd(Pr_1(x_0,Y),Pr_2(x_0,Y))$ which is a polynomial of degree 4 in variable Y. Taking one root $y_0$ and verify it is a root of the system of equation. 
\end{enumerate}

In general, the problem of finding one root of a system of polynomial equation is a hard question as well as finding all roots of  a system of polynomial equation. \\

\textbf{Polynomial System Solving over Finite Fields}
Let $\FF$ is a finite field.
\textbf{Input:} $f_1(x_1,..., x_n),..., f_m(x_1,..., x_n) \in \FF[x_1,..., x_n]$.

\textbf{Goal:} Find a vector $\alpha = (\alpha_1,...,\alpha_n) \in \FF^n$
s.t:
$f_1(\alpha) = \dots = f_m(\alpha) = 0.$\\

Theoretically it is a NP-hard problem (problem AN9 p.251 in Appendix: A list of NP-complete problem \cite{GJ79}). For the special cases $\lambda=2$ and $\lambda =3$ finding a solution can be done in polynomial-time (by using properties of resultants for $\lambda=3$).

However, in the case of no common factor, the number of roots is bounded by Bezout bound. To check that $Pr_i(X)_{i=1}^{\lambda-1}$ don't have common factor, we can check whether $Res(Pr_i(X),Pr_j(X),{X_1}) \ne 0,~\forall 1 \le i <j \le \lambda-1$. (Prop. 1, Ch. 3, \cite{CLO07}).
It costs $O(d^3)$ where $d = \prod\limits_{j=1}^{\lambda-1}(q^{\lambda+1} -q^j)$ arithmetic operations over $\FF_{q^m}[X_2,\dots,X_{\lambda-1}]$.

 We can see the importance of the Assumption\ref{asp:3} in the Proposition~\ref{prop:stable}. In the case where this assumption does not satisfy, i.e there exists $\overline{\bfA} \in \textbf{PGL}(\lambda,\FF_q) \backslash \overline{\bfI_{\lambda}}$ and $\bfA=  (a_{ij})_{i,j=1}^{\lambda} $ such that

\[
    \beta_j = \dfrac{a_{0,j} +\sum\limits_{i=1}^{\lambda-1}a_{i,j}\beta_i}{a_{0,0} + \sum\limits_{i=1}^{\lambda-1}a_{i,0}\beta_i}
\]

Thus, $\beta$ is a root of a system of $\lambda-1$ polynomial equations of degree 2:

\[
    (a_{0,0} + \sum\limits_{i=1}^{\lambda-1}a_{i,0}X_i)X_j - (a_{0,j} +\sum\limits_{i=1}^{\lambda-1}a_{i,j}X_i) = 0, ~j = 1,\dots,\lambda-1
\]

This polynomial is different from 0. Indeed, if it was, $a_{0,0} = a_{j,j}$ for $j = 1,\dots,\lambda-1$ and $a_i,j = 0$ for $i \ne j$, which means $\bfA \in \overline{I_{\lambda}}$.

This system is multivariate quadratic (MQ)-system,  the associated problem to decide if this system is solvable or not, also known as MQ-problem, is proven to be NP-complete \cite{GJ79}. Some algorithms used to solve this system is reviewed in the paper \cite{TW10}. In case of $\lambda = 3$, this can be solved easily by Resultant. Therefore, when the Assumption~\ref{asp:3} does not satisfy, we can exploit some information about $\beta$ by solving a multivariate quadratic system.

\subsubsection{Final step:}
Now from a solution $\vec{\beta}$ to (\ref{Eq:SystBeta}), we aim at finding the corresponding vector $\vec{h} \in (\FF_{q^m}^n)^{\lambda}$ such that $(\vec{h},\vec{\beta}) \in \mathcal{S}$.

We point out the key steps in the Coggia-Couvreur attack for $\lambda$ as follows. To be convenient, we denote known elements by blue color and unknown elements by red color. Given ${\color{blue}\beta_1',\dots,\beta_{\lambda-1}'}$, recover $({\color{red}\bfh_0',\dots,\bfh_{\lambda-1}'},{\color{blue}\beta_1',\dots,\beta_{\lambda-1}'})$ corresponding.

\begin{enumerate}

    \item For $\calI = \{1,\dots,\lambda\}$, since ${\color{blue}\vec{\beta'}}$ is known, $k_i = \dfrac{f^{-(\calI \setminus \{ i\} ) \cup \{ 0 \}}({\color{blue}\vec{\beta'}})}{ f^{-\calI}({\color{blue}\vec{\beta'}})}, i = 1,\dots,\lambda$ can be computed. Moreover, from the Lemma \ref{Lemma:Sum}, there exists a unique $\lambda$-tuple ${\color{blue}\bfu_\calI} = ({\color{blue}\bfu_1},\dots,{\color{blue}\bfu_\lambda}) \in \bigtimes\limits_{j=1}^{\lambda}{\color{blue}\calA_j}$ such that $\sum_{i=1}^{\lambda} {\color{blue}\bfu_i} = \color{blue}\bfu_0$, so we can compute
    
    ${\color{red}\bfh_0'}+\sum\limits_{j=1}^{\lambda-1}{\color{blue}\beta_j'}^{[-i]}{\color{red}\bfh_j'} = \dfrac{{\color{blue}\bfu_i}}{{\color{blue}k_i}}, i = 1,\dots,\lambda$. 
    
        \[
    ({\color{red}\bfh_0',\dots,\bfh_{\lambda-1}'})\begin{bmatrix}
  ~ & 1 & 1 &\ldots & 1 \\ 
  ~ & {\color{blue}\beta_1^{[-1]}} & {\color{blue}\beta_1^{[-2]}}  &\ldots &{\color{blue}\beta_1^{[-\lambda]}}\\ 
  ~ & \vdots &  \vdots &\ddots &\vdots\\
  ~ & {\color{blue}\beta_{\lambda-1}^{{[-1]}}} & {\color{blue}\beta_{\lambda-1}^{{[-2]}}} &\ldots &{\color{blue}\beta_{\lambda-1}^{{[-\lambda]}}}
\end{bmatrix} = \left(\dfrac{{\color{blue}\bfu_1}}{{\color{blue}k_1}},\dfrac{{\color{blue}\bfu_2}}{{\color{blue}k_2}},\dots,\dfrac{{\color{blue}\bfu_{\lambda}}}{{\color{blue}k_{\lambda}}} \right)
    \]
    It implies to a linear system of $\lambda$ equation and $\lambda$ unknowns which are vectors ${\color{red}\bfh_0'},\dots,{\color{red}\bfh_{\lambda-1}'}$ and the determinant of the matrix of coefficients is non-zero.
    \item After recovering an alternate key of the form $({\color{red}\bfh_0',\dots,\bfh_{\lambda-1}'},{\color{blue}\beta_1',\dots,\beta_{\lambda-1}'})$, we can compute the dual code $\cpubp$ and hence decrypt the ciphertext.
\end{enumerate}

\subsection{Complexity of the case $\lambda =3$}

This part shows the complexity of the attack by giving the number of operation in $\FF_{q^m}$. Let $\omega$ be the exponent of the complexity of linear algebra operations. The Frobenius map costs $O(\log q)$ operations.

\textbf{Step 1.} 
\begin{itemize}
    \item Computation of dual code $\cpubp$ cots $O(n^\omega)$ operations.
    \item Computation of $\cpubp^{[i]}, \forall i = 1,\dots,n-k+1$ costs $O(n^2\log q)$ operations.
    \item Computation $S_j = \sum\limits_{i=j}^{j+\lambda-1}\cpubp^{[i]}$ uses Gaussian elimination, so it costs $O(n^\omega)$. Thus, computation $\bigcap_{i=0}^{n-k-\lambda+1}S_j$ costs $O(n^{\omega+1})$.
    
    Overall step 1 costs $O(n^3\log q + n^{\omega+1})$ operations.
\end{itemize}

\textbf{Step 2.}
\begin{itemize}
    \item Computation $(u_1^\calI,\dots,u_\lambda^\calI)$ represents the resolution of a linear system $\lambda$ unknowns and $n$ equations costs $O(n)$ operations. This computation performed $O(n)$ times, so it costs $O(n^2)$ operations.
    \item Complexity of finding a root of a polynomial of degree $\tilde{d}$ by Cantor–Zassenhaus algorithm (\cite{bcg17}) costs $\tilde{O}(\tilde{d}^2m \log q)$ operations in $\FF_{q^m}$ for $\tilde{d} = (q^4-q)(q^4-q^2)$.
    \item Computation of resultant of bivariate polynomials $Res(P_1,P_2,X)$ which $P_1, P_2$ of degree $d,e$ by Lickteig–Roy subresultant algorithm costs $O(d^2e)$ (\cite{lec19}).
    \item A finite number of linear systems solving costs $O(n^\omega)$
\end{itemize}

\textbf{Summary.} For $m = O(n)$, overall cost of $O(n^3\log q + n^{\omega+1}) + \tilde{O}(d^2n \log q)$ for $d = (q^4-q)(q^4-q^2)$.

\subsubsection*{Conclusion}
We provided a distinguisher for the Loidreau's scheme for any $\lambda$ and the public code has rate $R_{pub} \ge 1 - 1/ \lambda$. From this distinguisher, we are able to complete a polynomial time key recovery with the assumption of finding one root of a system of polynomial equation. Moreover, we have extended the key recovery attack for $\lambda =3$.

The parameters of $(k,n)$, which $R_{pub} \ge 1 - 1/ \lambda$ should be avoided in Loidreau's scheme. In the future, it will be worthwhile to attempt a modification of the attack to work for lower rate codes $R_{pub} < 1 - 1/ \lambda$ as well.

\bibliographystyle{ieeetr}
\bibliography{biblio}

\end{document}